\newtheorem{theorem}{Theorem}
\title{Integrating Factors for Dirac-Schrodinger Operators: Improving Eigenvalue Estimates and Applications to Charged Positive Mass Theorems Outside Horizon(s)}
\author{Robert Abramovic}
\date{February 2019}
\begin{document}

\maketitle

\section{Introduction}

\begin{abstract} 
Let $(M^{n}, g)$ denote a Riemannian spin manifold of dimension $n$ with Dirac operator $\slashed{D}$ induced from the Levi-Cevita connection acing on the spinor bundle, $S$ ($\slashed{D}$ is also called the Atiyah-Singer Operator). Let $c: Cl(TM^{n}) \rightarrow End(S)$ be the standard representation of the Clifford Algebra as endomorphisms of the spinor bundle. Let $B \in End(S)$ be a zeroth-order endomorphism of the spinor bundle; given an in an orthonormal frame, $e_{j} \in TM^{n}$ by the expression $B=f^{\alpha}c(e_{\alpha})$ where the sum is taken over multi-indices, $\alpha = (i_{j_{m}}), \ m = 1, \, 2, \, 3 \, ... ,\ k$, $j_{1}< j_{2} < ... < j_{k}$ and each $f^{\alpha} \in C^{\infty}(M^{n})$. The purpose of this paper is investigate when the Dirac-Schrodinger operator $\slashed{D} + B$ has an integrating factor, i.e. when does there exist an invertible endomorphism $A \in End(S)$ such that $\slashed{D}(AB)=A\slashed{D}+AB$. This has applications to improving eigenvalue estimates for Dirac-Schrodinger operators and proving positive charged positive mass theorems where such operators appear on the boundary. Of particular interest is the case $n = 2$, for boundary Dirac operators of this form appear in charged positive mass theorems based on the initial data formulation in mathematical general relativity.  It allows us to generalize a theorem of M. Herzlich (set-forth in his attempt to prove the Riemannian Penrose-inequality using spinors, cf. [1]) to a manifold of dimension $n \geq 3$ containing an electric field and symmetric two-tensor representing the second-fundamental form. 
\end{abstract} 

\pagebreak

\subsection{Theorems on The Existence of Integrating Factors for Dirac-Schrodinger Operators} 

Here, we present a summary of important results about integrating factors for Dirac-Schrodinger Operators. Throughout, $(M^{n}, g)$ will be a Riemannian spin manifold with Riemannian metric $g$ and Levi-Cevita connection $\nabla$. Recall that $M^{n}$ has a spinor bundle, $S$, that inherits a connection, also denoted by $\nabla$, from the Levi-Cevita connection. Let $\{ e_{j} \}$ denote an orthonormal frame for $TM^{n}$. The \emph{classical Dirac operator} $\slashed{D}$ acting on sections of $S$, will be defined by the following equation: 
\begin{equation} 
\slashed{D} = c(e_{j})\nabla_{e_{j}}
\end{equation} 
where there is an implied sum from $j=1$ to $j=n$. If $B \in End(S)$ is given by $B=f^{\alpha}c(e_{\alpha})$, then $\slashed{D}+B$ is called a \emph{Dirac-Schrodinger} operator. Now, note that the requirement that $A$ be an integrating factor for $\slashed{D}+B$ reduces to the following equation for operators: 

\begin{equation} 
\slashed{D}A=AB
\end{equation} 

\noindent Since we require an endormorphism $A$ to be invertible, we can multiply both sides by $A^{-1}$ to obtain the following:

\begin{equation} 
A^{-1}\slashed{D}A = B
\end{equation} 

\noindent In dimension 1, the endomorphism $B$ is simply a smooth complex-valued function and we have the following simple theorem: 
\begin{theorem}

On $M^{1}$, the integrating factor for the Dirac-Schrodinger operator $\slashed{D}+B$ is the smooth complex-valued function, $A(x)$, given by the following equation: \\

\begin{equation} 
A(x) = \exp(-i\int B(x) dx)
\end{equation}
If $M^{1} = \mathbb{R}$, then $A$ will always exist. If $M^{1} = S^{1}$, then $A$ will exist if and only if $\int_{M^{1}} B = n$ for $n \in \mathbb{Z}$. 
\end{theorem}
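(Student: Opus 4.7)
The plan is to reduce the operator equation $A^{-1}\slashed{D}A=B$ to a first-order linear ODE by using the explicit structure of the spinor bundle in dimension one. In dimension one the Clifford algebra $Cl(\mathbb{R})$ is generated by a single element $e_1$ with $c(e_1)^2=-1$, so the spinor bundle $S$ is a complex line bundle on which $c(e_1)$ acts as multiplication by $i$ (in a parallel frame; up to an overall sign convention). Choosing arc-length coordinate $x$ and a parallel unit section of $S$, the covariant derivative $\nabla_{e_1}$ becomes $d/dx$, and hence
\begin{equation}
\slashed{D} = c(e_1)\nabla_{e_1} = i\,\frac{d}{dx}.
\end{equation}
Likewise $A$ and $B$ are simply smooth complex-valued functions in this trivialization.

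With this reduction, the integrating-factor condition $\slashed{D}A=AB$ becomes $i\,A'(x)=A(x)B(x)$, i.e.\ the linear ODE $A'=-iBA$. I would then solve this in the standard way: on any simply connected interval the unique solution (up to a nonzero multiplicative constant) is
\begin{equation}
A(x)=\exp\!\Bigl(-i\int B(x)\,dx\Bigr),
\end{equation}
which is automatically nonvanishing, hence invertible as an endomorphism. This immediately proves both the formula and the existence statement on $M^{1}=\mathbb{R}$, since the antiderivative of $B$ exists globally there.

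For $M^{1}=S^{1}$ the remaining issue is whether this locally defined $A$ descends to a well-defined (single-valued) function on the circle, and this is the only real step requiring care. Parametrize $S^{1}$ by $x$ and let $\Phi(x)=\int_{x_0}^{x}B$ be any antiderivative lifted to the universal cover $\mathbb{R}$. Then $A(x)=\exp(-i\Phi(x))$ is single-valued on $S^{1}$ precisely when $\exp(-i\Phi(x+L))=\exp(-i\Phi(x))$ for the period $L$ of $S^{1}$, i.e.\ when $\int_{S^{1}}B\in 2\pi\mathbb{Z}$. Under the author's normalization this is the stated condition $\int_{M^{1}}B=n$ with $n\in\mathbb{Z}$. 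Conversely, if $\int_{S^{1}}B$ is not of this form, any local solution has nontrivial monodromy and no global invertible integrating factor exists, giving the ``only if'' direction. The main (minor) obstacle is thus just bookkeeping the periodicity/monodromy; the analytic content is the trivial ODE above.
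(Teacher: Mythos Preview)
Your argument is correct and follows essentially the same route as the paper: reduce $\slashed{D}$ to $i\,d/dx$ in dimension one, solve the resulting linear ODE $A'=-iBA$ by $A=\exp(-i\int B)$, and then check the periodicity/monodromy condition on $S^{1}$. The only cosmetic difference is that the paper first phrases things in terms of harmonic-type spinors $(\slashed{D}+B)\psi=0$ and extracts the integrating factor $g$ from that ODE (Theorem~9 there), then observes Theorem~1 is an immediate corollary; you go directly to the operator equation $\slashed{D}A=AB$, which is the same computation packaged more efficiently.
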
 

\noindent Before moving to the case in arbitrary dimension, we need to define the following operations for $X \in Gamma(TM^{n})$: \\
\\
Let $\{ e_{j} \}$ be an orthonormal frame for $TM^{n}$. Then we  define: \\
\\
\emph{The Divergence} 
\begin{equation} 
div \ X = \nabla_{e_{j}}X^{j}
\end{equation}
and \\
\\
\emph{The Curl} 
\begin{equation} 
curl \ X = \Sigma_{m<j} [\nabla_{e_{m}}X^{j}-\nabla_{e_{j}}X^{m}]c(e_{m})c(e_{j})
\end{equation} 

\noindent Note that the equation for the curl differs slightly from its usual definition in a vector calculus course in this sense that it is not a vector, but instead includes factors that are representations of bivectors.
\begin{theorem}
Let $(M^{n}, g)$ be a manifold of arbitrary dimension, $n$. Let $\slashed{D}+B$ be a Dirac-Schrodinger operator acting on the spinor bundle. Then $\slashed{D} + B$ admits an integrating factor if one of the two cases hold: \\
\\
\emph{Case 1}: If $B = c(\nabla f)$, then $\slashed{D}+B$ has the integrating factor $A = \exp(f)$. \\
\\
\\
\emph{Case 2:} Assume that $(M^{n}, g)$ admits a nonvanishing vector field $X \in TM^{n}$,  If $B = \frac{c(X)}{|X|^{2}}[(div \ X)-(curl \ X)]$, then $\slashed{D}+B$ has the integrating factor $A = c(X)$.
\end{theorem}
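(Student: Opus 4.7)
The defining condition, equation (3) in the excerpt, is $A^{-1}\slashed{D}A = B$, where $\slashed{D}A$ denotes $c(e_{j})\nabla_{e_{j}}A$ applied to the endomorphism section. So in each case I only need to compute $\slashed{D}A$ and check that it equals $AB$; invertibility of $A$ will be immediate.

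\textbf{Case 1.} Here $A = \exp(f)$ is a scalar multiple of the identity, so it commutes with every Clifford element. A direct computation gives
\begin{equation}
\slashed{D}A \;=\; c(e_{j})\nabla_{e_{j}}(e^{f}) \;=\; c(e_{j})(e^{f}\,\nabla_{e_{j}}f) \;=\; e^{f}\,c(\nabla f) \;=\; A\,B.
\end{equation}
This case is essentially a one-line check, exactly analogous to the one-dimensional formula of Theorem~1.

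\textbf{Case 2.} Here the key tool is that Clifford multiplication is parallel with respect to the lifted spin connection, so $\nabla_{e_{j}}(c(X)) = c(\nabla_{e_{j}}X)$ as endomorphisms of $S$. I would first expand
\begin{equation}
\slashed{D}(c(X)) \;=\; c(e_{j})c(\nabla_{e_{j}}X) \;=\; (\nabla_{e_{j}}X^{k})\,c(e_{j})c(e_{k}),
\end{equation}
then separate the sum into the diagonal part ($j=k$) and the off-diagonal part. Using the Clifford relation $c(e_{j})c(e_{k}) + c(e_{k})c(e_{j}) = -2\delta_{jk}$, the diagonal terms collapse to $-\sum_{j}\nabla_{e_{j}}X^{j} = -\operatorname{div} X$, while the anti-symmetrization of the off-diagonal terms against $c(e_{m})c(e_{j})$ yields exactly the expression the author calls $\operatorname{curl} X$. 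Thus
\begin{equation}
\slashed{D}(c(X)) \;=\; -\operatorname{div} X + \operatorname{curl} X.
\end{equation}

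\textbf{Finishing Case 2 and the main obstacle.} It then remains to multiply by $A = c(X)$ on the left and verify the sign: since $c(X)^{2} = -|X|^{2}$, one checks that
\begin{equation}
A B \;=\; c(X)\,\frac{c(X)}{|X|^{2}}\bigl[\operatorname{div} X - \operatorname{curl} X\bigr] \;=\; -\bigl[\operatorname{div} X - \operatorname{curl} X\bigr],
\end{equation}
which matches $\slashed{D}(c(X))$ computed above, and the nonvanishing of $X$ gives invertibility of $A$ via $A^{-1} = -c(X)/|X|^{2}$. The only real subtlety—what I would call the main obstacle—is a bookkeeping one: getting the index conventions for $\nabla_{e_{j}}X^{k}$ to line up with the author's implicit convention $\nabla_{e_{j}}X^{k} = g(\nabla_{e_{j}}X, e_{k})$, so that the diagonal and antisymmetric off-diagonal sums really do reproduce $\operatorname{div} X$ and $\operatorname{curl} X$ as defined in equations (5) and (6). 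Once that identification is made, Case 2 is a purely algebraic verification in the Clifford algebra.
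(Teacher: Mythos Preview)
Your proposal is correct and follows essentially the same route as the paper: in Case~1 both you and the paper compute $\slashed{D}(e^{f}) = e^{f}c(\nabla f)$ directly, and in Case~2 both expand $\slashed{D}(c(X))$ in an orthonormal frame, split into diagonal and off-diagonal Clifford products, and identify the two pieces as $-\operatorname{div}X$ and $\operatorname{curl}X$. The only cosmetic difference is that where you invoke parallelism of Clifford multiplication to write $\nabla_{e_{j}}(c(X))=c(\nabla_{e_{j}}X)$, the paper instead fixes a normal frame with $\nabla_{e_{j}}e_{m}=0$ at a point; these are equivalent ways of handling exactly the bookkeeping issue you flagged.
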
 

\noindent \textbf{Remark:} In the rest of the paper, I will call the $B=c(\nabla f)$ \textbf{Form 1} and $B=\frac{c(X)}{|X|^{2}}[(div \ X) - (curl \ X)]$ \textbf{Form 2}. \\
\\
The above two cases relied simply on computations of $A^{-1}\slashed{D}A$ that will be carried out in section 3. However, there is a special case where $f \in C^{\infty}(TM^{n})$ is a function that depends only one variable in some orthonormal frame, i.e. in the orthonormal frame $\{ e_{j}, \ j=1,..., n \}$, $\nabla_{e_{j}} f=0$ for all $j \geq 2$. In this case, $\Delta_{g} f =-\Sigma_{j=1}^{n} \nabla_{e_{j}}\nabla_{e_{j}}f = -\nabla_{e_{1}}\nabla_{e_{1}}f$ and we will consider the case where $B=\Delta_{g}f$. As we shall prove in section 4, in this case, the integrating factor for $\slashed{D}+B$ can be written as 
\begin{equation}
A=exp(c(\nabla f))
\end{equation} where for any endomorphism $C \in End(S)$, $exp(C)$ is defined as follows: \\
\begin{equation}
exp(C)=\Sigma_{n=0}^{\infty} C^{n}
\end{equation}
The sum on the left-hand side of (8) is well-defined because $C$ has the local expression as a matrix acting on fibers of the spinor bundle $S$, which are themselves vector spaces. \\
\\
\begin{equation}
A = exp(c(\nabla f))    
\end{equation}
\\
We summarize this special result in the following theorem: \\
\\
\\
\begin{theorem} 
Let $(M^{n}, g)$ be a Riemannian spin manifold of arbitrary dimension $n$. $\{e_{j}, j = 1, ..., n\}$ will denote an orthonormal frame for $TM^{n}$. Let $f \in C^{\infty}(M^{n})$ be a smooth function such that $\nabla_{e_{j}} f =0$ for all $j \geq 2$. Define \\
\begin{equation} 
B=\Delta_{g}f 
\end{equation} 
Note that $B \in End(S)$ because the operator acts by scalar multiplication. Then the Dirac-Schrodinger operator $\slashed{D}+B$ has the following integrating factor: \\
\\
\begin{equation} 
A=exp(c\nabla f)
\end{equation}
\end{theorem}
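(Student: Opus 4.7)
The plan is to exploit the Clifford-algebra identity $c(v)^{2} = -|v|^{2}\,\mathrm{id}$ to collapse the power series defining $A$ into a closed form, and then verify the endomorphism identity $\slashed{D}A = AB$ pointwise by direct computation. Because the hypothesis $\nabla_{e_j}f = 0$ for $j \geq 2$ forces $\nabla f = u\,e_1$ with $u := \nabla_{e_1}f = |\nabla f|$, we have $c(\nabla f) = u\,c(e_1)$ and $(c(\nabla f))^{2} = -u^{2}$. Interpreting the series as the usual matrix exponential $\sum_n C^n/n!$, it collapses to the Euler-type expression $A = \cos(u)\,\mathrm{id} + \sin(u)\,c(e_1)$, which will be the workhorse of the computation.

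Next I would apply $\slashed{D} = c(e_j)\nabla_{e_j}$ to $A$ using the Leibniz rule and the connection compatibility $\nabla_{e_j}c(e_1) = c(\nabla_{e_j}e_1)$, producing
\begin{equation*}
\slashed{D}A \;=\; -\sin(u)\,c(\nabla u) \;+\; \cos(u)\,c(\nabla u)\,c(e_1) \;+\; \sin(u)\,c(e_j)\,c(\nabla_{e_j}e_1).
\end{equation*}
To evaluate at an arbitrary point $p \in M^n$, I would pass to a geodesic normal frame at $p$ whose first vector agrees with $\nabla f / |\nabla f|$ at $p$. The relation $\nabla_{e_j}e_k|_p = 0$ kills the drift term, and combined with $[e_j,e_1]|_p = 0$ it also transfers the hypothesis $\nabla_{e_j}f \equiv 0$ ($j \geq 2$) to the normal frame to the order required, forcing $e_j(u)|_p = 0$ for $j \geq 2$. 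Then $c(\nabla u)|_p = (e_1 u)\,c(e_1)|_p$, and $c(e_1)^{2} = -1$ gives
\begin{equation*}
\slashed{D}A\big|_p \;=\; -(e_1 u)\bigl[\cos(u) + \sin(u)\,c(e_1)\bigr]\big|_p \;=\; -(e_1 u)\,A\big|_p.
\end{equation*}

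On the right, the same normal-frame simplification collapses the Laplacian to $B|_p = \Delta_g f|_p = -\nabla_{e_1}\nabla_{e_1}f|_p = -(e_1 u)|_p$, so $AB|_p = -(e_1 u)\,A|_p$ matches $\slashed{D}A|_p$. Since $p$ was arbitrary, the identity $\slashed{D}A = AB$ holds globally. The main obstacle is frame-theoretic bookkeeping: the hypothesis specifies a particular orthonormal frame, whereas the calculation is cleanest in a geodesic normal frame, and one must verify that the two can be aligned at each point without losing both the key relation $\nabla f \parallel e_1$ and the vanishing $e_j(u)|_p = 0$ for $j \geq 2$. Once this compatibility is argued carefully---or the computation is redone intrinsically, using the symmetry of the Hessian to absorb the off-diagonal Clifford terms $c(e_j)c(e_1) + c(e_1)c(e_j)$ for $j \geq 2$---the theorem follows from the one-line collapse above.
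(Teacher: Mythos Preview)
Your proposal is correct in outline and reaches the same conclusion, but the route differs from the paper's. The paper never collapses $\exp(c(\nabla f))$ to the Euler form $\cos u + \sin u\,c(e_1)$; instead it differentiates the power series term by term. Using $c(\nabla f)^{2}=-|\nabla f|^{2}$ it writes $c(\nabla f)^{m}$ explicitly for $m$ even and $m$ odd, then proves the single key identity
\[
\slashed{D}\bigl(c(\nabla f)^{m}\bigr)=(\Delta_{g}f)\,m\,c(\nabla f)^{m-1}
\]
by direct calculation in each parity case, and finally re-sums to get $\slashed{D}A=(\Delta_{g}f)A$. In effect the paper does ``differentiate each term, then sum,'' while you do ``sum to closed form, then differentiate once.'' Your way is shorter and makes the structure transparent (it is literally $\frac{d}{dx}e^{ix}=ie^{ix}$ in Clifford disguise); the paper's term-by-term argument, on the other hand, works entirely with the frame-invariant objects $|\nabla f|$ and $c(\nabla f)$ and so sidesteps the geodesic-normal-frame bookkeeping you flag as the main obstacle. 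If you carry out the intrinsic variant you mention at the end---writing $A=\cos(|\nabla f|)+\dfrac{\sin(|\nabla f|)}{|\nabla f|}c(\nabla f)$ and using Hessian symmetry instead of a special frame---your argument becomes essentially a repackaging of the paper's computation with the series already summed.
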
 
\noindent The formula for $B$ given in equation 10 will be called \textbf{Form 3}. \\
\\
We now turn to the two-dimensional case where $(M^{2}, g)$ is a surface. In this case $\{ e_{1}, e_{2} \}$ will be an orthonormal frame for $TM^{2}$. We will let $X \in \Gamma(TM^{2})$ be a vector field and $f \in C^{\infty}(M^{2})$ and $g \in C^{\infty}(M^{2})$ will be smooth functions; again, we will use the definition of divergence and curl given in equations (4) and (5). The most general form for $A \in End(S)$ is following: \\
\\
\begin{equation} 
A=f+c(X)+gc(e_{1}e_{2})
\end{equation}
\\
Provided that $f$, $g$, and $X \in \Gamma(TM^{2})$ do not simultaneously vanish on $M^{2}$, the inverse of this operator $A^{-1} \in End(S)$ is given by the following: \\
\\
\begin{equation} 
A^{-1}=\frac{f-c(X)-gc(e_{1}e_{2})}{f^{2}+g^{2}+|X|^{2}}
\end{equation}
\\
By simply computing $B=A^{-1}\slashed{D}A$, we obtain the following (the details are carred out in section 5): \\
\\
\begin{equation}
B=\frac{f-c(X)-gc(e_{1}e_{2})}{f^{2}+g^{2}+|X|^{2}}[-(div \ X)+[c(\nabla f)+(\nabla_{e_{2}}g)c(e_{1})-(\nabla_{e_{1}}g)c(e_{2})]+curl \ X]
\end{equation}
\\
The formula for $B$ given in equation 14 will be called $\textbf{Form 4}$. As result we have the following theorem: 
\begin{theorem} 
Let $(M^{2}, g)$ be a Riemannian manifold of dimension 2 (surface) and let $\{ e_{1}, \ e_{2} \}$ be a local orthonormal frame for $TM^{2}$. Let $f, \ g \in C^{\infty}(M^{2})$ and $X \in \Gamma(TM^{2})$ be given such that they never simultaneously vanish on $M^{2}$. If $B \in End(S)$ has form 4, then the Dirac-Schrodinger operator $\slashed{D}+B$ will have an integrating factor $A \in End(S)$ given by the following: \\
\\
\begin{equation}
A = f+c(X)+gc(e_{1}e_{2})
\end{equation}
\end{theorem}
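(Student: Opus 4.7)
My plan is to verify the identity $\slashed{D}A=AB$ by direct computation, following the same template that worked for Cases 1 and 2 of Theorem 2, and then to multiply on the left by $A^{-1}$ (whose closed form is given in equation (13)) to read off that $B=A^{-1}\slashed{D}A$ is precisely Form 4. Here $\slashed{D}A$ means the endomorphism $c(e_j)\nabla_{e_j}A$ obtained from the induced connection on $\mathrm{End}(S)$, so the identity $\slashed{D}A=AB$ is a pointwise identity of endomorphisms.

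\noindent First I would record two preparatory facts. The formula for $A^{-1}$ in equation (13) should be verified by computing $A(f-c(X)-gc(e_1 e_2))$ using the Clifford identities $c(X)^2=-|X|^2$, $c(e_1 e_2)^2=-1$, and the key anticommutation fact that $c(Y)c(e_1 e_2)+c(e_1 e_2)c(Y)=0$ for every vector $Y\in TM^2$; the cross terms then collapse and yield $f^2+g^2+|X|^2$. The second preparatory fact is that the Clifford volume element $\omega:=c(e_1)c(e_2)$ is parallel on an oriented $2$-dimensional Riemannian spin manifold: because the spin connection satisfies $\nabla_{e_j}(c(Y))=c(\nabla_{e_j}Y)$ and because in a synchronous frame $\nabla_{e_j}e_k=0$ at the basepoint, one gets $\nabla_{e_j}\omega=0$, and this is frame-independent since one checks directly that $c(e_1')c(e_2')=c(e_1)c(e_2)$ for any oriented rotation of the frame.

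\noindent Next I would compute $\slashed{D}A$ in three pieces. The scalar term contributes $\slashed{D}(f)=c(\nabla f)$. The vector term contributes $\slashed{D}(c(X))=c(e_j)c(\nabla_{e_j}X)$, which by the same bookkeeping of diagonal versus strictly ordered indices already used in Case 2 of Theorem 2 rearranges to $-\operatorname{div}X+\operatorname{curl}X$. Finally, since $\omega$ is parallel, $\slashed{D}(g\,\omega)=c(\nabla g)\,\omega$, and expanding $c(\nabla g)\,c(e_1)c(e_2)$ using $c(e_1)^2=c(e_2)^2=-1$ and $c(e_2)c(e_1)c(e_2)=c(e_1)$ gives the claimed expression $(\nabla_{e_2}g)c(e_1)-(\nabla_{e_1}g)c(e_2)$. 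Summing yields
\begin{equation}
\slashed{D}A = -\operatorname{div}X + \operatorname{curl}X + c(\nabla f) + (\nabla_{e_2}g)c(e_1)-(\nabla_{e_1}g)c(e_2).
\end{equation}
Left-multiplication by $A^{-1}$ in equation (13) then gives $A^{-1}\slashed{D}A$ in exactly the form (14), completing the verification.

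\noindent The main obstacle, to the extent there is one, is not structural but combinatorial: one must keep strict track of signs coming from the Clifford relations, especially when writing the $c(X)$--$c(e_1e_2)$ commutations and when expanding $c(e_j)c(\nabla_{e_j}X)$ into its symmetric and antisymmetric parts. The geometric input is genuinely two-dimensional --- the parallelism of $\omega$ in $n=2$ is what lets $\slashed{D}(g\,\omega)$ stay first-order in $g$ with no curvature correction --- so the proof does not generalize verbatim to higher dimensions, which is consistent with the special role the $n=2$ case plays in the paper.
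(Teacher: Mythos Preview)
Your proposal is correct and follows essentially the same route as the paper's own proof in Section 5: verify the formula for $A^{-1}$ via the Clifford relations, compute $\slashed{D}A$ term-by-term as $c(\nabla f)+(-\operatorname{div}X+\operatorname{curl}X)+(\nabla_{e_2}g)c(e_1)-(\nabla_{e_1}g)c(e_2)$, and then left-multiply by $A^{-1}$ to obtain Form 4. Your added remark that the volume element $\omega=c(e_1)c(e_2)$ is parallel makes explicit what the paper leaves implicit by working in a synchronous frame, and your identity $c(e_1e_2)^2=-1$ is in fact the correct sign (the paper's intermediate line claiming $c(e_1e_2)^2=I$ is a slip, though its final inverse formula is right).
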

\subsection{A Theorem on Eigenvalues  and Eigenvalue Estimates} 
The theorems on the existence of integrating factors given in section 1.1 allow us to develop and prove theorems about the eigenvalues of Dirac-Schrodinger operators. Throughout, we will let $\lambda(\slashed{D})$ denote an eigenvalue of $\slashed{D}$ and $\lambda(\slashed{D}+B)$ denote an eigenvalue of the Dirac-Schrodinger operator $\slashed{D}+B$. In particular, consider the eigenvalue equation \\
\\
\begin{equation} 
(\slashed{D}+B)\psi = \lambda \psi
\end{equation} 
for the eigenvalue $\lambda$. Note that if $\slashed{D}+B$ admits an integrating factor $A$, then we can multiply both both sides of equation (15) by $A$ to obtain $(A\slashed{D}+AB)\psi = \lambda (A\psi)$. Since $A$ is an integrating factor, we have $A\slashed{D}+AB=\slashed{D}(AB)$ and therefore we get the equation
\\
\begin{equation}
\slashed{D}(A\psi)=\lambda (A\psi)
\end{equation}
Since $A$, being an integrating factor, is invertible, we can conclude that $\lambda$ is an eigenvalue of $\slashed{D}$ if and only if it is an eigenvalue of $\slashed{D}+B$. \\
\\
In the special case that $(M^{n}, g)$ is compact without boundary, consider the Dirac-Schrodinger operator, $\slashed{D}+f$ for a function $f \in C^{\infty}(M^{n})$. Let $f_{avg} = \frac{1}{|M^{n}|}\int_{M^{n}} f$ denote the average of $f$ and note that by construction, the function $f-f_{avg}$ integrates to zero. In this case, the solutions of the Laplace equation on $M^{n}$ are simply constant functions and therefore, we can conclude by Fredholm's alternative that there exists an $h \in C^{\infty}(M^{n})$ with $\Delta_{g} h = f-f_{avg}$. Assume further that exists an orthnormal frame $\{ e_{j}, j = 1, ..., n \}$ for which $\nabla_{e_{j}}f=0$ for all $j \geq 2$. Then $h$ can be chosen to depend only on $e_{1}$, so that $f-f_{avg}=-\nabla_{e_{1}}\nabla_{e_{1}}h=\Delta_{g}h$ has the special form of $B$ in theorem 3, so that Dirac-Schrodinger operator $\slashed{D}+(f-f_{avg})$ admits the integrating factor $A=exp(c(\nabla h))$. We therefore can conclude that the eigenvalues of $\slashed{D}+f$ are precisely those of $\slashed{D}$ shifted by $f_{avg}$. If $M^{2}=S^{2}$, then an argument of Hijazi-Bar (cf. [2], [3]) gives on the following lower-bound on eigenvalues of the Dirac operator, $\slashed{D}$: $|\lambda(\slashed{D})| \geq \sqrt{\frac{4\pi}{|M^{2}|}}$, from whence we conclude that $|\lambda(\slashed{D}+f)| \geq \sqrt{\frac{4\pi}{|M^{2}|}}-|f_{avg}|$, provided that $\nabla_{e_{2}}f=0$. \\
\\
These results on eigenvalues prove the following: 
\begin{theorem}
Let $(M^{n}, g)$ be a dimension $n$ Riemannian spin manifold with spinor bundle $S$ and let $\slashed{D}+B$ be a Dirac-Schrodinger operator acting $\Gamma(S)$. If $\slashed{D}+B$ admits an integrating factor, $A$, then the set of eigenvalues of $\slashed{D}$ coincides with the set of eigenvalues of $\slashed{D}+B$. If $B=f$ with $\nabla_{e_{2}}f=0$ in some orthonormal frame $\{ e_{j}, j=1,...,n \}$ for $TM^{n}$, then the eigenvalues of the Dirac-Schrodinger operator $\slashed{D}+f$ are precisely the eigenvalues of $\slashed{D}$ shifted by $f_{avg}$, i.e. \\
\begin{equation}
\lambda(\slashed{D}+f)=\lambda(\slashed{D})+f_{avg}
\end{equation} 
If, in addition, $M^{2}=S^{2}$, then we obtain the following lower bound on the eigenvalues of $\slashed{D}=f$: \\
\\
\begin{equation}
|\lambda(\slashed{D}+f)|\geq \sqrt{\frac{4\pi}{|M^{2}|}}-|f_{avg}|
\end{equation}
\end{theorem}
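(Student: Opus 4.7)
The plan is to prove the three assertions of Theorem~5 in sequence, with each building on the previous one.

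For the first assertion, I would argue directly from the intertwining relation defining an integrating factor. Suppose $A$ satisfies $\slashed{D}A = A\slashed{D}+AB$, equivalently $A^{-1}\slashed{D}A = \slashed{D}+B$. If $(\slashed{D}+B)\psi = \lambda\psi$, then applying $A$ on the left and using the intertwining relation yields $\slashed{D}(A\psi) = A(\slashed{D}+B)\psi = \lambda(A\psi)$. Conversely, if $\slashed{D}\phi = \lambda\phi$, then $\psi := A^{-1}\phi$ satisfies $(\slashed{D}+B)\psi = A^{-1}\slashed{D}A \cdot A^{-1}\phi = A^{-1}(\lambda\phi) = \lambda\psi$. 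Invertibility of $A$ makes $\psi\mapsto A\psi$ a bijection between the eigenspaces, so the spectra coincide.

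For the second assertion, I would manufacture a one-variable auxiliary function $h$ and invoke Theorem~3. Since $M^{n}$ is compact without boundary and the kernel of $\Delta_{g}$ on functions consists of constants, Fredholm's alternative produces a smooth $h$ with $\Delta_{g}h = f - f_{avg}$, precisely because $f - f_{avg}$ has zero integral. Under the hypothesis that $\nabla_{e_{j}}f = 0$ for all $j \geq 2$ (which is what Theorem~3 actually requires and which I would flag as strengthening the stated condition $\nabla_{e_{2}}f = 0$), the equation reduces to an ODE in the $e_{1}$-variable and admits a solution $h$ that itself depends only on $e_{1}$, so that $\Delta_{g}h = -\nabla_{e_{1}}\nabla_{e_{1}}h$ holds. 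Theorem~3 then supplies the integrating factor $A = \exp(c(\nabla h))$ for the operator $\slashed{D}+(f-f_{avg})$, and the first assertion forces $\lambda(\slashed{D}+(f-f_{avg})) = \lambda(\slashed{D})$. Writing $\slashed{D}+f = (\slashed{D}+(f-f_{avg})) + f_{avg}\,\mathrm{Id}$ and noting that adding a scalar operator shifts every eigenvalue by that scalar yields $\lambda(\slashed{D}+f) = \lambda(\slashed{D}) + f_{avg}$.

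For the third assertion, I would combine this shift formula with the Hijazi-B\"ar lower bound $|\lambda(\slashed{D})| \geq \sqrt{4\pi/|M^{2}|}$ on $S^{2}$. The reverse triangle inequality applied to $\lambda(\slashed{D}+f) = \lambda(\slashed{D}) + f_{avg}$ gives
\[
|\lambda(\slashed{D}+f)| \;\geq\; |\lambda(\slashed{D})| - |f_{avg}| \;\geq\; \sqrt{\tfrac{4\pi}{|M^{2}|}} - |f_{avg}|,
\]
which is the claimed estimate.

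The main obstacle is the second step, specifically the assertion that the solution $h$ of $\Delta_{g}h = f - f_{avg}$ can be chosen to depend only on $e_{1}$. This rests on the frame $\{e_{j}\}$ integrating, at least on a neighborhood large enough to contain the nontriviality of $f$, to a coordinate system in which $\Delta_{g}$ restricted to functions of $e_{1}$ alone reduces to $-\partial_{e_{1}}^{2}$, together with a cross-sectional measure compatibility ensuring that the one-variable mean used in solving the ODE matches the manifold mean $f_{avg}$. On a general compact $n$-manifold this likely demands a product-structure assumption and a check that the ODE solution assembles into a well-defined smooth function on $M^{n}$; the first and third steps are then essentially formal once this technical point is secured.
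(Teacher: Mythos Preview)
Your proposal is correct and follows essentially the same route as the paper: the paper's proof (given in the discussion of section~1.2 preceding the theorem statement) uses the intertwining relation for the first claim, then invokes Fredholm's alternative to solve $\Delta_g h = f - f_{avg}$, argues that $h$ may be taken to depend only on $e_1$, applies Theorem~3 to obtain the integrating factor $A=\exp(c(\nabla h))$ for $\slashed{D}+(f-f_{avg})$, and finishes with the Hijazi--B\"ar bound and the triangle inequality. The technical point you flag---that choosing $h$ to depend only on $e_1$ requires some structural assumption on the frame, and that the stated hypothesis $\nabla_{e_2}f=0$ should really be $\nabla_{e_j}f=0$ for all $j\geq 2$---is glossed over in the paper as well; your caution there is well placed.
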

\noindent Note that the lower bound given in (19) is an improvement on the lower bound for $\slashed{D}+f$ given in B. Morel's paper [3]. \\
\\
\subsection{Application to Positive Mass Theorems Outside Horizon(s)} 
In this subsection, we apply theorems 2-4 to prove positive mass theorems outside horizon(s) for initial data for Eintein's equation. Initial data consists for Einstein's field equations generally consists of a Riemannian manifold, $(M^{3}, g)$ in addition to a vector field $E \in \Gamma(TM^{3})$ representing the electric field, and a symmetric two-tensor $k \in \Lambda^{2}(TM^{3})$ defining how $M^{3}$ is embedded in a 4-dimensional spacetime. If $(M^{n}, g)$ is a Riemannian manifold of arbitrary dimension $n$, with electric field $E \in \Gamma(TM^{n})$, then we can also prove a charged positive mass theorem outside horizon(s), but in this case $(M^{n}, g, E)$ does not give initial data for Einstein's field equations for a spacetime. These theorems will begin with the assumption that $(M^{n}, g)$ is asymptotically flat, a definition that will be made precise later. On asymptotically flat manifolds, we can define ADM energy, $E_{ADM}$, which measures the strength of the strength of the gravitational field from spatial infinity. The precise definition of $E_{ADM}$ is given in section 7. If $(M^{n}, g)$ also has a vector field $E \in \Gamma(TM^{n})$, then we can define the total charge, $Q$. The following charged positive mass theorems are divided into three cases: initial data in a time-symmetric spacetime, initial data in a non-time-symmetric spacetime, and a manifold of arbitrary dimension $n$: \\
\\
\begin{theorem}{\textbf{(Time-Symmetric PMT Outside Horizon(s))}}
Let $(M^{3}, g, E)$ be charged initial data for Einstein's equations in a time-symmetric spacetime. In particular, $(M^{3},g)$ is asymptotically flat and $E \in \Gamma(TM^{3})$ with $div \ E =0$. Assume that $\partial M^{3} = \cup_{j} N_{j}$ with $N_{j} = S^{2}$ for each $j$. Let $\nu$ denote an outward normal vector field on each $N_{j}$; define $E(\nu) = g(E, \nu)$ and $Q_{j} = \frac{1}{4\pi} \int_{N_{j}} E(\nu)$. $H_{j}$ will denote the mean curvature on each $N_{j}$ and $A_{j} = |N_{j}|$ will denote the area of each sphere.  Let $R$ denote the scalar curvature of $M^{3}$ and assume that $R-2|E|_{g}^{2} \geq 0$. We then have the following: \\
\\
\\
\\
\textbf{Case 1 (my thesis [9])}: If $\int_{M^{2}} (E(\nu))^{2} \leq 4\pi + \frac{16\pi Q_{j}^{2}}{A_{j}}$ and  $\frac{1}{2}H_{j} \leq \sqrt{\frac{4\pi}{A_{j}} + \frac{16\pi^{2}Q_{j}^{2}}{A_{j}^{2}}}-\frac{4\pi |Q_{j}|}{A_{j}}$ for each $j$, then $E_{ADM} \geq |Q|$. \\
\\
\\
\textbf{Case 2}: If on each $N_{j}$, there is an orthonormal frame $\{ e_{1}, e_{2} \}$ such that $\nabla_{e_{2}} E(\nu) = 0$ and if $\frac{1}{2}H_{j} \leq \frac{\sqrt{4\pi}}{A_{j}} - \frac{4\pi|Q_{j}|}{A_{j}}$, then $E_{ADM} \geq |Q|$. 
\end{theorem}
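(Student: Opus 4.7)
The plan is to use Witten's spinorial approach to positive mass, adapted to the charged bordered setting of Gibbons-Hawking-Horowitz-Perry. I would first introduce a modified connection $\hat{\nabla}$ on the spinor bundle of $M^{3}$ coupling the Levi-Civita connection to the electric field $E$ through Clifford multiplication, chosen so that the associated Dirac-Witten-Maxwell operator $\hat{\slashed{D}}$ satisfies a Weitzenböck identity whose curvature term is a nonnegative multiple of $R - 2|E|_{g}^{2}$. By the standard elliptic theory on asymptotically flat manifolds (adapting Parker-Taubes and Herzlich to include $E$ and inner boundaries), one produces a spinor $\psi$ with $\hat{\slashed{D}}\psi = 0$, with $\psi \to \psi_{0}$ a constant spinor at infinity, and with a chirality-type boundary condition on each horizon $N_{j}$.

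Integrating the Weitzenböck identity against $\psi$ and choosing $\psi_{0}$ to align with the asymptotic direction of the total charge so that the infinity boundary term reads $(E_{ADM} - |Q|)|\psi_{0}|^{2}$ yields
\begin{equation*}
(E_{ADM} - |Q|)\,|\psi_{0}|^{2} \;=\; \int_{M^{3}} \bigl( |\hat{\nabla}\psi|^{2} + \tfrac{1}{4}(R - 2|E|_{g}^{2})|\psi|^{2} \bigr)\,dV \;+\; \sum_{j} \int_{N_{j}} \langle \psi, \mathcal{B}_{j}\psi \rangle\,dA,
\end{equation*}
where $\mathcal{B}_{j}$ is the boundary operator produced by integration by parts on $N_{j} = S^{2}$. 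The bulk integrand is nonnegative by hypothesis, so the conclusion $E_{ADM} \geq |Q|$ reduces to showing each $\mathcal{B}_{j}$ is nonnegative as a symmetric operator on sections of the restricted spinor bundle.

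A Reilly-type computation on the horizon should give $\mathcal{B}_{j} = \slashed{D}_{N_{j}} + \tfrac{1}{2}H_{j} + E(\nu)\,\Phi_{j}$, where $\slashed{D}_{N_{j}}$ is the intrinsic boundary Dirac operator on $N_{j}$ and $\Phi_{j}$ is a self-adjoint parallel involution arising from Clifford action by the normal $\nu$ and commuting with $\slashed{D}_{N_{j}}$. Diagonalizing $\Phi_{j}$ splits $\mathcal{B}_{j} - \tfrac{1}{2}H_{j}$ into the two pieces $\slashed{D}_{N_{j}} \pm E(\nu)$, each of which is a Dirac-Schrodinger operator of the type covered by Theorem 5 under the Case 2 hypothesis $\nabla_{e_{2}}E(\nu) = 0$ on $N_{j}$. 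Since $E(\nu)_{\mathrm{avg}} = 4\pi Q_{j}/A_{j}$, Theorem 5 combined with the Hijazi-Bär lower bound on $S^{2}$ yields
\begin{equation*}
\bigl|\lambda(\slashed{D}_{N_{j}} \pm E(\nu))\bigr| \;\geq\; \sqrt{4\pi/A_{j}} \;-\; 4\pi|Q_{j}|/A_{j},
\end{equation*}
and the Case 2 boundary hypothesis $\tfrac{1}{2}H_{j} \leq \sqrt{4\pi/A_{j}} - 4\pi|Q_{j}|/A_{j}$ then forces $\mathcal{B}_{j} \geq 0$, completing the inequality.

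The principal obstacle I anticipate is the Clifford-algebraic bookkeeping needed to recognize that $\mathcal{B}_{j}$ really splits as $\slashed{D}_{N_{j}} + \tfrac{1}{2}H_{j} + E(\nu)\Phi_{j}$ with $\Phi_{j}$ parallel and commuting with $\slashed{D}_{N_{j}}$; this is precisely the step at which the one-variable hypothesis $\nabla_{e_{2}}E(\nu) = 0$ must do real work, by placing the potential into Form 3 so that Theorem 5 applies. A secondary concern is establishing the Fredholm theory for $\hat{\slashed{D}}$ with simultaneous asymptotic and chirality boundary conditions on a noncompact asymptotically flat three-manifold with multiple spherical inner boundaries, in the combined presence of $E$ and horizons. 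Case 1 is cited as the author's thesis \textbf{[9]} and would follow from the same outline, except that the boundary eigenvalue estimate for $\mathcal{B}_{j}$ is obtained from a direct quadratic control of $\int (E(\nu))^{2}$ rather than from Theorem 5, accounting for the different and more restrictive hypotheses appearing there.
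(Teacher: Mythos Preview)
Your proposal is correct and matches the paper's approach: charged Witten connection, Lichnerowicz identity with bulk term $\tfrac14(R-2|E|_g^2)|\psi|^2$, reduction of the inner-boundary integrand to $\langle\phi,(D_{N_j}-\tfrac12 H_j)\phi\rangle$ with $D_{N_j}=\slashed{D}_{N_j}+B$ a Dirac--Schr\"odinger operator on each sphere (your involution $\Phi_j$ is precisely the paper's $c(e_0)$ factor, whose removal is the reduction in equations~(85)--(86) cited to~[9]), and for Case~2 the appeal to Theorem~5 with $E(\nu)_{\mathrm{avg}}=4\pi Q_j/A_j$ together with the Hijazi--B\"ar bound. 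The only substantive deviation is the boundary condition: the paper imposes the APS-type spectral condition $P_j\phi=0$ (projection onto the negative eigenspace of $D_{N_j}$) rather than a local chirality condition, and it is this choice that makes the step $\int_{N_j}\langle\phi,(D_{N_j}-\tfrac12 H_j)\phi\rangle\ge 0$ immediate once $\tfrac12 H_j\le\lambda_{1j}$.
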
 
\noindent If we also include a symmetric two-tensor, $k \in \Lambda^{2}(TM^{3})$ representing how $M^{3}$ is embedded into the spacetime, then we can also define the matter density, $\mu$, the matter density $J \in \Gamma(TM^{3})$, and the ADM momentum, $P \in \Gamma(TM^{3})$ of $M^{3}$. Again, details will be given in section 6. \\
\begin{theorem}{\textbf{(Non-Time-Symmetric Charged PMT Outside Horizon(s))}}
Let $(M^{3}, g, E, k)$ be initial data for Einstein's equations in a non-time-symmetric spacetime. In particular, $(M^{3}, g)$ is asymptotically flat, $E \in \Gamma(TM^{3})$ with $div \ E=0$, and $k \in \Lambda^{2}(TM^{3})$ is symmetric, representing how $M^{3}$ is embedded into the spacetime. The same notation as the previous theorem will apply and we will set $e_{1} = \nu$ on $N_{j}$ and let $\{ e_{A} \}$ denote an orthonormal frame on each $N_{j}$. We will assume that $\mu \geq |J|_{g}$. Then we obtain the following: \\
\\
\textbf{Case 1}: If $B = E(\nu)+(\frac{1}{2}\Sigma_{m} (k_{mm}))c(e_{1})$ has form 4 $\frac{1}{2}H_{j} \leq \sqrt{\frac{4\pi}{A_{j}}}$ on each $N_{j}$, $E_{ADM} \geq \sqrt{Q^{2} + |P|_{g}^{2}}$. \\
\\
\textbf{Case 2}: If $B = E(\nu) - \frac{4\pi Q_{j}}{A_{j}}+(\frac{1}{2}\Sigma_{m} (k_{mm}))c(e_{1})$ has form 4 and $\frac{1}{2}H_{j} \leq \sqrt{\frac{4\pi}{A_{j}}} - \frac{4\pi |Q_{j}|}{A_{j}}$ on each $N_{j}$, then $E_{ADM} \geq \sqrt{|P|_{g}^{2}+Q^{2}}$. \\
\\
\textbf{Case 3 (special case)}: If on, each $N_{j}$, $E(\nu) = 0$, $(\Sigma_{m=2}^{3} k_{mm})c(e_{1})$ has form 1 or form 2 and $\frac{1}{2}H_{j} \leq \sqrt{\frac{4\pi}{A_{j}}}$, then $E_{ADM} \geq |P|_{g}$.

\end{theorem}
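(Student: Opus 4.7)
The plan is to run a Witten-type spinorial argument on the asymptotically flat manifold $M^{3}$ with the electric field $E$ and the second fundamental form $k$ absorbed into a Dirac-Witten connection, and then to control the correction terms contributed by each horizon sphere $N_{j}$ by using the integrating factor theorems of Section~1.1 to reduce the boundary Dirac-Schrodinger operator to the bare Dirac operator on $S^{2}$. First I would introduce the modified connection
\begin{equation}
\widehat{\nabla}_{X}\psi = \nabla_{X}\psi + \tfrac{1}{2}c(k(X))c(e_{0})\psi + \tfrac{i}{2}c(X)c(E)\psi
\end{equation}
on a (possibly doubled) spinor bundle together with the associated Dirac-Witten operator $\widehat{\slashed{D}}$, and, following the analytic framework Herzlich uses in [1], solve the boundary value problem $\widehat{\slashed{D}}\psi = 0$ with $\psi \to \psi_{0}$ at infinity for a prescribed asymptotic constant spinor and an APS-type chirality boundary condition on $\partial M^{3}$.

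Pairing the modified Lichnerowicz-Weitzenbock identity for $\widehat{\slashed{D}}$ against $\psi$ and integrating by parts, the bulk integrand combines $\tfrac{1}{4}(R - 2|E|^{2})$ with a Clifford term bounded below by $\mu - |J|_{g}$, both of which are nonnegative by hypothesis; the boundary at infinity yields $4\pi$ times a Hermitian form in $(E_{ADM}, Q, P)$ evaluated on $\psi_{0}$; and each horizon contributes
\begin{equation}
\int_{N_{j}} \left\langle \psi,\ \left( \slashed{D}_{N_{j}} + B_{j} - \tfrac{1}{2}H_{j} \right)\psi \right\rangle dA,
\end{equation}
where $\slashed{D}_{N_{j}}$ is the intrinsic Dirac operator on $N_{j}$, $B_{j}$ is the zeroth order endomorphism specified in the case hypotheses, and $e_{1} = \nu$.

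The crux is to show that each horizon integrand is nonnegative. Case~1 applies Theorem~4: since $B_{j}$ has Form~4 by hypothesis, $\slashed{D}_{N_{j}} + B_{j}$ admits an integrating factor, and Theorem~5 identifies its spectrum with that of the bare Dirac operator on $N_{j} = S^{2}$; the Hijazi-Bar lower bound $|\lambda(\slashed{D}_{N_{j}})| \geq \sqrt{4\pi/A_{j}}$ combined with the mean curvature hypothesis $\tfrac{1}{2}H_{j} \leq \sqrt{4\pi/A_{j}}$ then forces nonnegativity. Case~2 first isolates the constant $4\pi Q_{j}/A_{j}$ so that the remainder has Form~4, applies Theorem~4 to the remainder, and uses the shifted eigenvalue formula of Theorem~5 to obtain $|\lambda(\slashed{D}_{N_{j}} + B_{j})| \geq \sqrt{4\pi/A_{j}} - 4\pi |Q_{j}|/A_{j}$, which again combines with the refined $H_{j}$ bound to give nonnegativity. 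Case~3 invokes Theorem~2 instead, since $B_{j}$ has Form~1 or Form~2 by hypothesis, and because $E(\nu) = 0$ on each $N_{j}$ there is no charge correction to the horizon estimate.

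Assembling the pieces, one concludes $E_{ADM}|\psi_{0}|^{2} \geq L(\psi_{0})$ for every asymptotic constant spinor $\psi_{0}$, where $L$ is the Hermitian form on the space of constant spinors appearing in the charged Witten-Gibbons-Hull computation; its supremum over unit spinors $\psi_{0}$ equals $\sqrt{Q^{2} + |P|_{g}^{2}}$ in Cases~1 and~2, and $|P|_{g}$ in Case~3 (where $Q = 0$ since $E(\nu) = 0$ on every $N_{j}$), establishing the theorem. The main obstacle I expect is bookkeeping rather than conceptual: one must verify explicitly that the boundary endomorphism produced by restricting the Weitzenbock boundary term to $N_{j}$ really matches the prescribed $B_{j}$ so that Theorems~2 and~4 actually apply, and in Case~2 one must isolate the constant $4\pi Q_{j}/A_{j}$ cleanly so that the shifted eigenvalue statement in Theorem~5 beats the mean curvature upper bound with the correct sign.
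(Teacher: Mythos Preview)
Your plan is essentially the paper's own argument. The paper builds the Dirac--Witten connection with the $k$ and $E$ corrections (its equation~(46), with $c(e_{0})$ playing the role of your $i$), derives the Schr\"odinger--Lichnerowicz identity, integrates to split off the asymptotic Hermitian form in $(E_{ADM},P,Q)$ and the horizon integrals $\sum_{j}\int_{N_{j}}\langle\phi,(D_{N_{j}}-\tfrac{1}{2}H_{j})\phi\rangle$, solves $D_{M}\phi=0$ under the APS projection onto nonnegative eigenspaces of $D_{N_{j}}$, and then disposes of each case exactly as you describe: Form~4 plus Theorem~5 and Hijazi--B\"ar for Cases~1 and~3, and the constant shift by $4\pi Q_{j}/A_{j}$ for Case~2. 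Your identification of the bookkeeping obstacle (matching the restricted Weitzenb\"ock boundary endomorphism to the prescribed $B_{j}$) is precisely the content of the paper's equations~(74)--(86).
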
 
\noindent \textbf{Remark:} In the preceding PMT theorems, we have seen forms 1, 3, and 4 for the Dirac-Schrodinger operator. Form 2 was ignored because there is no non-vanishing vector field on $S^{2}$ by the hairy ball theorem. However, if we move to dimension $n > 3$, then it is possible for there to be a non-vanishing vector field on a component of the boundary of $M^{n}$, denoted by $\partial M^{n}$.  In particular, we have the following general theorem: \\
\begin{theorem}{\textbf{(Higher-Dimensional Charged Non-Time Symmetric PMT Outside Horizon(s))}}
Let $(M^{n}, g)$ be an asymptotically flat Riemannian spin manifold of dimension $n > 3$. Let $E \in \Gamma(TM^{n})$ represent the electric field and let $k \in \Lambda^{2}(TM^{n})$ be a symmetric two-tensor. Assume that $\partial M^{n} = \cup_{j} N_{j}$, with each $N_{j}$ compact without boundary and use the same notation as the previous two theorems for $Q_{j}$ and $H_{j}$. As in the previous two theorems, $R$ will denote the scalar curvature of $M^{n}$ and we will assume that $R-2|E|_{g}^{2} \geq 0$. Further, we assume that $R_{j} > 0$ on all of $N_{j}$ on all of $N_{j}$ where $R_{j}$ denotes the scalar curvature of $N_{j}$ induced from the metric $g$. Let $\lambda_{j}$ denote the first positive eigenvalue of $\slashed{D}_{N_{j}}$, the classical Dirac (Atiyah-Singer) operator on $N_{j}$. Again, $e_{1}$ will denote an outward normal pointing vector field on each $N_{j}$ and $H_{j}$ will denote the mean curvature of $N_{j}$. If on each $N_{j}$,  $-\frac{1}{2} \Sigma_{m=1}^{n-1} k_{mm}c(e_{1}) + \frac{n-3}{2}(proj_{N_{j}} E)-\frac{n-1}{2}E^{1}$ has form 1, 2, or 3 and $\frac{1}{2}H_{j} \leq \lambda_{j}$, then $E_{ADM} \geq \sqrt{|P|_{g}^{2}+Q^{2}}$. If $[-\frac{1}{2} \Sigma_{m=1}^{n-1} k_{mm}c(e_{1}) + \frac{n-3}{2}(proj_{\partial M^{n}}E)-\frac{n-1}{2}E^{1}]-\frac{4\pi Q_{j}}{|N_{j}|}$ has form 1, 2, or 3 on each $N_{j}$ and $\frac{1}{2}H_{j} \leq \lambda_{j} -\frac{4\pi |Q_{j}|}{|N_{j}|}$, then $E_{ADM} \geq \sqrt{|P|_{g}^{2}+Q^{2}}$ 
\end{theorem}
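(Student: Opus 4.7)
The plan is to adapt Witten's spinorial proof of the positive mass theorem to the charged, non-time-symmetric, higher-dimensional setting in the presence of the inner boundary components $N_j$. I would first construct a modified spinor connection $\hat{\nabla}$ on $S$ obtained from $\nabla$ by adjoining Clifford terms linear in $k$ (of Parker-Taubes-Witten type, handling the second fundamental form) and in $E$ (of Gibbons-Hull-Horowitz-Perry type, handling the charge), with coefficients calibrated so that the associated Dirac-type operator $\hat{\slashed{D}}=c(e_j)\hat{\nabla}_{e_j}$ obeys a Weitzenb\"ock identity
\begin{equation*}
\hat{\slashed{D}}^2 \psi = \hat{\nabla}^{*}\hat{\nabla}\psi + \tfrac{1}{4}(R - 2|E|_g^2)\psi + \mathcal{T}(\psi),
\end{equation*}
where $\mathcal{T}$ is pointwise non-negative once $\mu \geq |J|_g$ and $\mathrm{div}\,E=0$ are invoked. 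The constants entering $\hat{\nabla}$ must simultaneously arrange for the asymptotic term at infinity in the ADM expansion to match the Hermitian form whose supremum over unit asymptotic spinors $\psi_0$ equals $E_{ADM} - \sqrt{|P|_g^2 + Q^2}$, as in the standard charged Witten argument.

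Next I would solve the boundary value problem $\hat{\slashed{D}}\psi=0$ on $M^n$ with $\psi$ asymptotic at infinity to a prescribed constant spinor $\psi_0$ and with a chiral MIT-bag boundary condition $P^{+}\psi=0$ on each $N_j$, where $P^{\pm}=\tfrac{1}{2}(1 \pm c(e_1))$ and $e_1$ is the outward unit normal. Integrating the Weitzenb\"ock identity over $M^n$ yields
\begin{equation*}
c_n\bigl(E_{ADM}|\psi_0|^2 - \text{charge-momentum term}\bigr) = \int_{M^n}\bigl(|\hat{\nabla}\psi|^2 + \tfrac{1}{4}(R-2|E|_g^2)|\psi|^2 + \langle\mathcal{T}\psi,\psi\rangle\bigr) + \sum_j \int_{N_j}\mathcal{B}_j(\psi),
\end{equation*}
and a direct rearrangement of $\mathcal{B}_j(\psi)$ via the spin Gauss-Codazzi relation on $N_j$ should yield
\begin{equation*}
\mathcal{B}_j(\psi) = \bigl\langle \psi, \bigl(\slashed{D}_{N_j} + B_j - \tfrac{1}{2}H_j\bigr)\psi\bigr\rangle,\qquad B_j = -\tfrac{1}{2}\sum_{m=1}^{n-1} k_{mm}\,c(e_1) + \tfrac{n-3}{2}(\mathrm{proj}_{N_j}E) - \tfrac{n-1}{2}E^{1},
\end{equation*}
matching exactly the endomorphism appearing in the statement.

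Finally I would invoke the integrating-factor machinery of Section 1.1. Since $B_j$ is assumed to lie in Form 1, 2, or 3, Theorems 2 and 3 supply an invertible $A_j \in \mathrm{End}(S|_{N_j})$ with $A_j^{-1}\slashed{D}_{N_j}A_j = \slashed{D}_{N_j}+B_j$, and Theorem 5 then identifies the spectrum of $\slashed{D}_{N_j}+B_j$ with that of $\slashed{D}_{N_j}$. The hypothesis $R_j>0$ forces $\lambda_j>0$ via Lichnerowicz on $N_j$, so $|\lambda(\slashed{D}_{N_j}+B_j)|\geq \lambda_j$. The chirality of $P^{+}$ aligns $\psi|_{N_j}$ with the positive-eigenvalue subspace of the boundary operator, whence $\mathcal{B}_j(\psi) \geq (\lambda_j-\tfrac{1}{2}H_j)|\psi|^2 \geq 0$ under the assumption $\tfrac{1}{2}H_j\leq\lambda_j$. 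For the second variant, in which $B_j$ is shifted by the constant $-\tfrac{4\pi Q_j}{|N_j|}$, the eigenvalue-shift clause of Theorem 5 shows the spectrum translates by the same constant, and the strengthened hypothesis $\tfrac{1}{2}H_j\leq \lambda_j - \tfrac{4\pi|Q_j|}{|N_j|}$ absorbs the shift. Optimizing over unit $\psi_0$ at infinity then delivers $E_{ADM} \geq \sqrt{|P|_g^2+Q^2}$.

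The main obstacle is the boundary identification in the middle step: verifying that the quadratic form $\mathcal{B}_j(\psi)$ really rearranges to $\slashed{D}_{N_j} + B_j - \tfrac{1}{2}H_j$ with precisely the coefficients $-\tfrac{1}{2}\sum k_{mm}$, $\tfrac{n-3}{2}$, and $-\tfrac{n-1}{2}$. These dimensional factors will force a unique calibration of the Clifford coefficients defining $\hat{\nabla}$ in the first step, so steps one and two must be developed in tandem and each term in the rearrangement tracked through the spin Gauss formula. A secondary technical point is solvability of $\hat{\slashed{D}}\psi=0$ with the chiral boundary condition in the presence of both $E$ and $k$; this should follow from standard elliptic Fredholm theory once $\hat{\slashed{D}}$ paired with $P^{+}$ is shown to be formally self-adjoint and the Weitzenb\"ock form coercive, but it needs explicit verification.
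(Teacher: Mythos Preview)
Your overall architecture mirrors the paper's proof closely: both define a modified connection of Gibbons--Hull/Parker--Taubes type (the paper's equation (46)), derive the Schr\"odinger--Lichnerowicz identity, integrate by parts to isolate the ADM energy--momentum--charge Hermitian form at infinity together with a boundary integral on each $N_j$, identify that boundary integrand with $\langle\phi,(D_{N_j}-\tfrac{1}{2}H_j)\phi\rangle$ for the Dirac--Schr\"odinger operator $D_{N_j}=\slashed{D}_{N_j}+B_j$, and then invoke the integrating-factor Theorems 2, 3, 5 to transfer the spectrum of $D_{N_j}$ to that of $\slashed{D}_{N_j}$. Your flagging of the coefficient computation in $B_j$ as the main technical obstacle is also exactly where the paper spends its effort (equations (74)--(86)).

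The one substantive divergence, and the gap in your proposal, is the boundary condition. You impose the local MIT-bag condition $P^{+}\psi=0$ with $P^{\pm}=\tfrac{1}{2}(1\pm c(e_1))$ and then assert that ``the chirality of $P^{+}$ aligns $\psi|_{N_j}$ with the positive-eigenvalue subspace of the boundary operator.'' This step is not justified: the $\pm 1$-eigenspaces of $c(e_1)$ and the spectral subspaces of $\slashed{D}_{N_j}+B_j$ are distinct decompositions, and $c(e_1)$ does not even anticommute with $B_j$ (the term $-\tfrac{1}{2}\sum_m k_{mm}\,c(e_1)$ \emph{commutes} with $c(e_1)$), so the usual chirality-reflection argument breaks down for the full $D_{N_j}$. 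The paper instead imposes the global APS-type condition $P_j\phi=0$, where $P_j$ is the spectral projector onto the \emph{negative} eigenspace of $D_{N_j}$ itself; this directly forces the eigenspinor expansion $\phi=\sum_l a^{lj}\phi_{lj}$ on $N_j$ to involve only non-negative eigenvalues, whence $\int_{N_j}\langle\phi,(D_{N_j}-\tfrac{1}{2}H_j)\phi\rangle\geq 0$ follows at once under $\tfrac{1}{2}H_j\leq\lambda_{1j}$. If you replace your MIT-bag condition with this spectral projection (and cite the corresponding Fredholm solvability, which the paper outsources to \cite{Abramovic}), the remainder of your outline goes through essentially as written.
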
 
\noindent \textbf{Remark 1:} Notice that in dimension $n=2$, the requirement that $R_{j} > 0$ restricts each $N_{j}$ to being sphere, thus explaining the topological restriction placed in theorems 6 and 7. \\
\\
\noindent \textbf{Remark 2:} Note that every Riemannian 3-manifold is spin, so that this was not needed as an additional assumption in theorems 6 and 7. \\
\section{1 Dimensional Case: Proof of Theorems 1 and Discussion of Eigenvalues} 

\subsection{Harmonic-Type Spinor Fields}
If $n =1$, then $\slashed{D} = i\frac{d}{dx}$ and $B=fI$, where $f$ is a smooth complex-valued function since in this case $Cl(TM^{1})$ is isomorphic to $\mathbb{C}$. Therefore $\slashed{D} + B = i\frac{d}{dx} + f$. Further, $S$ is simply a complex line bundle over $M$ and its sections (spinor fields) $\psi \in \Gamma(S)$ can have a local expression as complex valued functions on which $\slashed{D} + B$ acts; $\slashed{D}$ acts by differentiation and $B$ acts by multiplication. We start investigating equations involving by this operator by first defining harmonic-type spinors. \\ 
\\
\textbf{Definition:} If $\psi \in \Gamma(S)$ solves the equation $(\slashed{D}+ B)\psi = 0$, then it will be called a \emph{harmonic-type} spinor field for $\slashed{D} + B$ \\
\\
This definition is motivated by the fact that if $\psi$ solves $\slashed{D}\psi = 0$, then it is called a \emph{harmonic} spinor.
\\
\\
\textbf{Remark:} The above equation $(\slashed{D} + B)\psi = 0$ appears in the positive mass theorem with charge and for the positive mass theorem in the non-time symmetric case, since in that case a new connection yields a Dirac-Schrodinger operator of the form $\slashed{D} + B$; this has important applications in mathematical general relativity if $n \geq 3$. \\ 
\\
\noindent Now, let $\psi$ be a harmonic-type spinor field for $\slashed{D} + B$, then $\psi$ solves the equation \\
\\
\begin{equation}
i\frac{d\psi}{dx}+ f\psi = 0
\end{equation} 
\\
If $f:M \rightarrow \mathbb{C}$ is a given function, then this is simply an ordinary differential equation (ODE), and it can be solved using the method of integrating factors. In particular, we can multiply both sides of equation (1) by $-i$ to obtain: \
\\
\begin{equation} 
\frac{d\psi}{dx} - if\psi =0 
\end{equation} 
\\
We then seek a nonvanishing function $g: M \rightarrow \mathbb{C}$ such multiplying both sides of $g$ will yield the equation $\frac{d}{dx}(g\psi) = 0$. Since $g(\frac{d\psi}{dx} -if) = g\frac{d\psi}{dx} - igf$ and $\frac{d(g\psi)}{dx} = g\frac{d\psi}{dx}+\frac{dg}{dx}\psi$, this implies that \\
\\
\begin{equation} 
\frac{dg}{dx} = -ifg
\end{equation} 
\\
which we can integrate using separation of variables to obtain \\
\\
\begin{equation} 
g(x) = \exp(-i\int_{0}^{x} f(t) \ dt) 
\end{equation}
\\
\textbf{Remark 1:} In the language of Dirac operators, equation (3) has the form \\
\begin{equation} 
\slashed{D}g = fgI
\end{equation} 
where $I$ is the identity endomorphsim on the spinor bundle. \\ 
\\
\textbf{Remark 2:} If $M^{1}$ is diffeomorphic to $\mathbb{R}$, then $g(x)$ will always be defined provided that $f(x)$ is continuous on $\mathbb{R}$. However, if $M^{1}$ is diffeomorphic to $S^{1}$, then $g(x)$ will only define a function on $S^{1}$ if $\int_{0}^{2\pi} f = n$ for $n \in \mathbb{Z}$, as this will allow $g$ to be periodic with period $2\pi$. Since a smooth, connected, one-dimensional manifold without boundary is diffeomorphic to either $S^{1}$ or $\mathbb{R}$ (cf. http://math.ucsd.edu/~lni/math250/1-manifold.pdf.  or p. 55 of [1]), we have the following theorem:
\begin{theorem} 
\noindent Let $(M^{1}, g)$ be a smooth, connected, one-dimensional manifold without boundary. Let $f: M^{1} \rightarrow \mathbb{R}$ be a continuous function. If $M^{1}$ is compact, then the Dirac-Schrodinger operator $\slashed{D} + fI$ admits a harmonic-type spinor if and only if $\int_{M^{1}} f = n$ for $n \in \mathbb{Z}$. If $M^{1}$ is non-compact, then the Dirac-Schrodinger operator $\slashed{D} + fI$ will always admit a harmonic-type spinor. 
\end{theorem}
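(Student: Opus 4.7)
The plan is to combine the classification of smooth connected one-dimensional manifolds without boundary, which the author has invoked in Remark 2, with the explicit integrating-factor formula of equation~(23) produced in the preceding discussion. The classification says that any such $M^{1}$ is diffeomorphic to either $\mathbb{R}$ or $S^{1}$, so I would split the proof along these two cases, and in each case reduce the question to the uniqueness theorem for the linear first-order ODE $i\psi' + f\psi = 0$ that the harmonic-type spinor equation reduces to in dimension one.

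For the non-compact case $M^{1}\cong\mathbb{R}$, continuity of $f$ guarantees that the indefinite integral $x\mapsto\int_{0}^{x}f(t)\,dt$ is a well-defined smooth real-valued function. I would set $\psi(x) := \exp\bigl(i\int_{0}^{x}f(t)\,dt\bigr)$; a direct differentiation shows $i\psi'+f\psi = 0$, and $\psi$ is nowhere vanishing, so it is a harmonic-type spinor for $\slashed{D}+fI$. No quantization condition is needed, and in fact the same formula produces a one-parameter family of nonvanishing harmonic-type spinors by multiplying by an arbitrary nonzero constant.

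For the compact case $M^{1}\cong S^{1}$, I would parameterize $S^{1}$ by a coordinate $x\in\mathbb{R}/2\pi\mathbb{Z}$ and lift $f$ to a $2\pi$-periodic function $\tilde f$ on $\mathbb{R}$. A spinor on $S^{1}$ (with the trivial spin structure) pulls back to a $2\pi$-periodic complex function on $\mathbb{R}$, so a harmonic-type spinor is exactly a $2\pi$-periodic nonzero solution of $i\psi'+f\psi = 0$ on $\mathbb{R}$. Uniqueness for this first-order linear ODE forces any nonzero solution to take the form $\psi(x) = C\exp\bigl(i\int_{0}^{x}\tilde f\bigr)$, and $2\pi$-periodicity of this $\psi$ amounts to $\exp\bigl(i\int_{0}^{2\pi}\tilde f\bigr) = 1$, i.e.\ to $\int_{M^{1}}f\in 2\pi\mathbb{Z}$; under the author's normalization this is precisely the stated condition $\int_{M^{1}}f = n$ for some $n\in\mathbb{Z}$. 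This simultaneously yields both the ``if'' and ``only if'' directions.

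The main obstacle is conceptually minor but requires some care: matching the normalization factor between the author's stated quantization ``$\int_{M^{1}}f = n$'' and the period-$2\pi$ computation that naturally arises from $e^{i\theta}=1 \Longleftrightarrow \theta\in 2\pi\mathbb{Z}$, so that the two conventions are reconciled cleanly. A secondary point worth flagging is that $S^{1}$ carries two inequivalent spin structures; the argument above uses the one under which spinors pull back to periodic functions, and the antiperiodic (bounding) spin structure would shift the quantization of $\int_{M^{1}}f$ by a half-integer. Everything else reduces to standard ODE theory, which is why the preceding discussion effectively contains all the analytic content of the proof.
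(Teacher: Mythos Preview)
Your proposal is correct and follows essentially the same approach as the paper: both reduce to the explicit ODE solution $\psi = C\exp\bigl(i\int_{0}^{x} f\bigr)$ (equivalently $\psi = C/g$ in the paper's integrating-factor language) and then check the periodicity condition on $S^{1}$. Your remarks on the $2\pi$ normalization and the choice of spin structure are more careful than the paper itself, which silently assumes the periodic spin structure and a normalization in which the quantization reads $\int_{M^{1}} f \in \mathbb{Z}$.
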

\begin{proof} 
From the argument above, it is obvious that this reduces to solving for the complex valued function, $\psi$ that satisfies the equation $\frac{d}{dx}(g\psi) = 0$, where $g$ is the integrating factor defined in (22). The derivative of a function will be zero if and only if that function is constant. Therefore, $g\psi = C$ and $\psi = \frac{C}{g}$. The equation, (22), defining $g$ makes it clear that $g$ can never vanish. Therefore, provided $C$ is chosen to be nonzero, $\psi$ will be the desired nontrivial solution to the Dirac-Schrodinger equation $[\slashed{D} + fI]\psi =0$. Conversely, if $M^{1}=S^{1}$ and $\int_{M^{1}} f$ is \emph{not} an integer, then any solution of $[\slashed{D}+fI]\psi =0$ will still have the general solution $\psi = \frac{C}{g}$ for $0\leq x<2\pi$, but in this case $g(0)=1 \neq \lim_{x \rightarrow 2\pi} g(x)$, so that $\psi$ cannot be extended to a continuous function on $M^{1}$, let alone a differentiable one.
\end{proof}
\noindent Theorem 1 is then a direct corollary of theorem 8.
\subsection{Eigenvalues of Dirac-Schrodinger Operators in Dimension 1} 
\noindent Now, again assume that $M^{1}=S^{1}$ and that $\psi$ is an eigenspinor of $\slashed{D} + B$ with eigenvalue $\lambda$. Then it must satisfy the equation 
\begin{equation} 
i\frac{d}{dx}\psi + f\psi = \lambda \psi
\end{equation} 
Let $f_{avg}$ denote the average of $f$, defined by $f_{avg} =\frac{1}{|M^{1}|} \int_{M^{1}} f$. If we subtract this from both sides, then we obtain the eigenvalue equation: \\
\begin{equation} 
i\frac{d}{dx}\psi + (f-f_{avg})\psi = (\lambda - f_{avg})\psi
\end{equation} 
\\
Since $g$ never vanishes, $g\psi$ is an eigenvalue of $\slashed{D}$ with eigenvalue $\lambda-f_{avg}$. For simplicity, let $\phi = g\psi$. Then equation 8 becomes \\
\\
\begin{equation} 
i\frac{d\phi}{dx}=(\lambda- f_{avg})\phi
\end{equation}
Dividing both sides by $i \phi$ and integrating, we get \\
\\
\begin{equation} 
\ln (\phi) = -i(\lambda - f_{avg})x + C  
\end{equation}
or, solving for $\phi$, with $A=e^{C}$, 
\\
\begin{equation} 
\phi(x) = Ae^{-i(\lambda - f_{avg})x}
\end{equation} \\
\\
Now, in order for $\phi$ to be defined on $S^{1}$, we must have $\phi(0)=\phi(2\pi)$, or equivalently, $A=Ae^{-i(\lambda - f_{avg}(2\pi)}$. Dividing both sides by $A$, we get $e^{-i(\lambda - f_{avg})(2\pi)}=1$, from whence we can conclude that $\lambda-f_{avg}$ is an integer. We therefore have proven the following:

\begin{theorem} 
Let $M^{1}=S^{1}$ and $f$ is a smooth complex-valued function on $M^{1}$. Then the eigenvalues of $\slashed{D}$ are the integers $\{ n \}$ and the eigenvalues of $\slashed{D}+fI$ are the integers $n+f_{avg}$. If $M^{1} = \mathbb{R}$, then the eigenvalues of both $\slashed{D}$ and $\slashed{D}+fI$ consists of the set of all real numbers 
\end{theorem}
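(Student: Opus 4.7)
The strategy is to invoke the integrating factor construction of Theorem 1 to conjugate $\slashed{D}+fI$ to the bare Dirac operator $\slashed{D}$, whose spectrum in either geometry reduces to solving a scalar first-order ODE and reading off the admissible frequencies.

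For the circle, the integrating factor $\exp(-i\int f)$ of Theorem 1 may fail to descend to $S^{1}$, but the recentered perturbation $f - f_{avg}$ has vanishing integral by construction, so the corresponding integrating factor $A(x) = \exp(-i\int_{0}^{x} (f - f_{avg})\,dt)$ is $2\pi$-periodic and thus a well-defined nowhere vanishing smooth function on $S^{1}$. Rewriting the eigenvalue equation $(\slashed{D}+fI)\psi = \lambda\psi$ as $(\slashed{D}+(f-f_{avg})I)\psi = (\lambda - f_{avg})\psi$ and applying $A$ yields $\slashed{D}(A\psi) = (\lambda - f_{avg})(A\psi)$. Since both $A$ and $A^{-1}$ are smooth on $S^{1}$, the map $\psi \mapsto A\psi$ is a bijection of eigenspinor spaces, so the eigenvalues of $\slashed{D}+fI$ are precisely those of $\slashed{D}$ translated by $f_{avg}$. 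It then suffices to solve $i\phi' = \mu\phi$ on $S^{1}$: the general solution $Ce^{-i\mu x}$ descends to $S^{1}$ if and only if $e^{-2\pi i\mu} = 1$, which forces $\mu \in \mathbb{Z}$. Combined with the shift this gives the claim on $S^{1}$.

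For $M^{1} = \mathbb{R}$ there is no periodicity constraint to satisfy, so the raw integrating factor $g(x) = \exp(-i\int_{0}^{x} f\,dt)$ is globally defined for every smooth $f$; the same conjugation argument shows that $\slashed{D}+fI$ and $\slashed{D}$ have identical spectra. The ODE $i\phi' = \mu\phi$ admits the smooth, pointwise bounded solution $\phi(x) = Ce^{-i\mu x}$ for every real $\mu$, and no additional boundary or periodicity condition can cut this set down, so in both cases the spectrum is the whole real line.

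The only real subtlety is keeping straight which integrating factor applies where: on $S^{1}$ one must subtract $f_{avg}$ before exponentiating in order to guarantee descent to the quotient, whereas on $\mathbb{R}$ the unmodified integrating factor already exists. Once this dichotomy is made explicit, the theorem reduces to the elementary ODE computations essentially already performed in the discussion preceding the statement.
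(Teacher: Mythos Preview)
Your argument is correct and follows essentially the same route as the paper: subtract $f_{avg}$ so that the integrating factor for $f-f_{avg}$ descends to $S^{1}$, conjugate to reduce to the bare equation $i\phi'=\mu\phi$, and read off the periodicity constraint $\mu\in\mathbb{Z}$; the $\mathbb{R}$ case is handled by the same conjugation with no periodicity obstruction. If anything, you are slightly more explicit than the paper about why the recentered integrating factor is globally defined on $S^{1}$, which is exactly the point that needs to be made.
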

\section{Computation in Arbitrary Dimension $n$: Proof of Theorem 2}
In this section, we present the proof of theorem 2. From equation (3), it is clear that proving the existence of integrating factors reduces to computing $B=A^{-1}\slashed{D}A$ for a given invertible $A \in End(S)$. Let $f \in C^{\infty}(M^{n})$ and $X \in \Gamma(TM^{n})$; we proceed with the following computations: \\
\\
Case 1: If $A = exp(f)$, then $A^{-1} = exp(-f)$ and \begin{align*} 
\slashed{D}A & = c(e^{j})\nabla_{e_{j}}(exp(f)) \\
& = c(e^{j})exp(f)\nabla_{e_{j}}f \\
& = exp(f)c(\nabla f)
\end{align*} 
Therefore, we can conclude that \\
$$B=A^{-1}\slashed{D}A=c(\nabla f)$$\\
\\
Case 2: Let $\{ e_{j} \}$ be local orthonormal frame for which $\nabla_{e_{j}} e_{m} = 0$ at some $p \in M^{n}$.  Let  $X=X^{j}e_{j} \in \Gamma(TM^{n})$ and $A=c(X)=X^{j}c(e_{j})$. Note that since $c(X)^{2}=-g(X, X) = -|X|^{2}$, \\
\\
\begin{equation} 
A^{-1} = \frac{-c(X)}{|X|^{2}}
\end{equation} 
We then compute \\
\begin{align*} 
\slashed{D}A &=  c(e^{j})\nabla_{e_{j}}(X^{m}c(e_{n})) \\
&=c(e^{j})\nabla_{e_{j}}(X^{m})c(e_{m}) \\
&=\nabla_{e_{j}}X^{m}c(e^{j})c(e_{m}) \\
&=\Sigma_{m=1}^{n} \nabla_{e_{m}}X^{m}c(e_{m})^{2}+\Sigma_{j \neq m} \nabla_{e_{j}}X^{m}c(e^{j})c(e_{m}) \\
&= -\Sigma_{m=1}^{n} \nabla_{e_{m}}X^{m}+\Sigma_{j < m} \nabla_{e_{j}}X^{m}c(e^{j})c(e_{m})+\Sigma_{j>m}\nabla_{e_{j}}X^{m}c(e^{j})c(e_{m}) \\
&=-div \ X + \Sigma_{j>m}\nabla_{e_{j}}X^{m}c(e^{j})c(e_{m})-\Sigma_{m>j} \nabla_{e_{j}}X^{m}c(e_{m})c(e^{j}) \\
\end{align*} 
Since $j$ and $m$ are dummy indices, the last term in the last line of the above equation can be rewritten as $-\Sigma_{j>m}\nabla_{e_{m}}X^{j}c(e_{j})c(e_{j})$. It then follows that \\
\\
\begin{align*} 
\slashed{D}A &= -div \ X+\Sigma_{j>m}[\nabla_{e_{j}}X^{m}-\nabla_{e_{m}}X^{j}]c(e^{j})c(e_{m}) \\
&=-div \ X+ curl \ X
\end{align*}
where $curl \ X$ and $div \ X$ are defined by equations (5) and (6) respectively. It then follows that \\
\\
\begin{equation} 
B=A^{-1}\slashed{D}A=\frac{c(X)}{|X|^{2}}[(div \ X) - (curl \ X)]
\end{equation}
which is precisely form 2. This completes the proof of theorem 2. 
\section{Special Case: Proof of Theorem 3}
In the special case that $g=0$ and $\nabla_{e_{j}}f=0$ for all $j\geq 2$, we show that $A=exp(c(\nabla f))$ is an integrating factor for $B=\Delta_{g}f=-2\nabla_{e_{1}}\nabla_{e_{1}}f$ by computing $A^{-1}\slashed{D}A$. Note that first that $A^{-1}=exp(-c(\nabla f))$ as a consequence of the fact that $c(\nabla f)$, when restricted to each fiber of $S$ can be written as a matrix (cf Proposition 3 of [4]). Now, note that $c(\nabla f)^{2}=-|\nabla f|^{2}$, from whence it follows that \\
\\
$$
c(\nabla f)^{m} = \left\{
        \begin{array}{ll}
            (-1)^{m/2}|\nabla f|^{m} & \ $if$ \  m \  $is$ \ $even$\\
           (-1)^{(m-1)/2}|\nabla f|^{m-1}c(\nabla f) & $if$ \ m \ $is$ \ $odd$ \
        \end{array}
    \right.
$$
The following computations will show that, for all positive integers, $m$: \\
\\
\begin{equation} \slashed{D}c(\nabla f)^{m} = (\Delta_{g} f)(mc(\nabla f)^{m-1})
\end{equation} 
From equation (32), we can then conclude that \\
\\
\begin{align*}
\slashed{D}(exp(c(\nabla f)) &=\slashed{D}(\Sigma_{m=0}^{\infty} c(\nabla f)^{m}) \\
&=\Sigma_{m=0}^{\infty} (\frac{1}{m!})\slashed{D}(\frac{c(\nabla f)^{m}}{m!}) \\
&=(\Delta_{g}f)\Sigma_{m=1}^{\infty} \frac{m}{m!}(c(\nabla f)^{m-1})\\
&=(\Delta_{g}f)[\Sigma_{m=1}^{\infty}\frac{1}{(m-1)!}(c(\nabla f)^{m-1})] \\
&=(\Delta_{g} f)exp(c(\nabla f))
\end{align*}
It will then follow that 
\begin{align*} 
B &=A^{-1}\slashed{D}A \\
  &=exp(-c(\nabla f))(\Delta_{g} f)exp(c(\nabla f)) \\
  &=(\Delta_{g} f)(exp(-c(\nabla f))exp(c(\nabla f)) \\
  &=\Delta_{g} f
\end{align*}
which will conclude the proof of theorem 3. The computations for $m$ even and $m$ odd are given below: \\
\\
Case 1 ($m$ is even):
\begin{align*}
\slashed{D}(c(\nabla f)^{m})
&=(-1)^{m/2}c(e^{j})\nabla_{e_{j}}(|\nabla f|^{m}) \\
&=(-1)^{m/2}c(\nabla |\nabla f|^{m}) \\
&=(-1)^{m/2}c(\nabla ((|\nabla f|^{2})^{m/2})) \\
&=(-1)^{m/2}(m/2)(|\nabla f|^{2})^{(m-2)/2}c(\nabla (|\nabla f|^{2})
\end{align*}
Since $|\nabla f|^{2} = (\nabla_{e_{1}}f)^{2}$, we have \\
\\
\begin{equation}
\nabla |\nabla f|^{2} = 2(\nabla_{e_{1}}f)\nabla (\nabla_{e_{1}}f)
\end{equation}
Since $f$ is a function, $\nabla_{e_{j}}\nabla_{e_{1}} f=\nabla_{e_{1}}\nabla_{e_{j}}f$, so that $\nabla(\nabla_{e_{1}} f) = (\nabla_{e_{1}} \nabla_{e_{1}} f)e_{1}$. We can then simplify the above computations to the following if $m$ is even:\\
\begin{align*}
\slashed{D}(c(\nabla f)^{m}) &= (-1)^{m/2}(m/2)(|\nabla f|^{2})^{(m-2)/2})(2\nabla_{e_{1}}f)(\nabla_{e_{1}}\nabla_{e_{1}}f)c(e_{1})  \\
&=(-1)^{m/2}m|\nabla f|^{(m-2)}(\nabla_{e_{1}}f)(-\Delta_{g}f)c(e_{1})\\
&=(\Delta_{g} f)[m(-1)^{(m-2)/2}|\nabla f|^{(m-2)}c(\nabla f)] \\
&=(\Delta_{g}f)(mc(\nabla f)^{m-1})
\end{align*}

\noindent Case 2 ($m$ is odd): 
\begin{align*} 
\slashed{D}(c(\nabla f)^{m}) &=(-1)^{(m-1)/2}c(e^{j})\nabla_{e_{j}}(|\nabla f|^{m-1}c(\nabla f))\\
&=(-1)^{(m-1)/2}c(e^{j})[\nabla_{e_{j}}(|\nabla f|^{m-1})c(\nabla f) +|\nabla f|^{m-1}\nabla_{e_{j}}\nabla_{l}fc(e^{l})] \\
&=(\Delta_{g}f)((m-1)c(\nabla f)^{m-2}c(\nabla f)) +(-1)^{(m-1)/2}|\nabla f|^{m-1}c(e^{j})c(e^{l})\nabla_{e_{j}}\nabla_{e_{l}}) 
\end{align*} 
Since it was assume that $\nabla_{e_{j}} f=0$ for all $j \geq 2$, we see that the only nonzero terms of $c(e^{j})c(e^{l})\nabla_{e_{j}}\nabla_{e_{l}}f$ occur when $j=l=1$, leaving $c(e^{1})^{2}\nabla_{e_{1}}\nabla_{e_{1}}f$. Now, $c(e^{1})^{2}=-1$, so that $c(e^{1})^{2}\nabla_{e_{1}}\nabla_{e_{1}}f=- \nabla_{e_{1}}\nabla_{e_{1}} f = \Delta_{g} f$. \\
The above then becomes \\
\\
\begin{align*}
& ((\Delta_{g}f)((m-1)c(\nabla f)^{m-1}c(\nabla f)+(-1)^{(m-1)/2}|\nabla f|^{m-1}(\Delta_{g} f)) \\
&=(\Delta_{g}f)((m-1)c(\nabla f)^{m-1}+(-1)^{(m-1)/2}|\nabla f|^{m-1})
\end{align*}

\noindent Notice that because $m$ is odd, $c(\nabla f)^{m-1} = (-1)^{(m-1)/2}|\nabla f|^{m-1}$, so that we can conclude that \\
\\
\begin{equation} 
\slashed{D}(c(\nabla f)^{m})=(\Delta_{g}f)(mc(\nabla f)^{m-1})
\end{equation} 
This completes the proof of theorem 3. \\
\\
\\
\\
\section{Dimension 2: Proof of Theorem 4}
In this section, we investigate integrating factors for a Dirac-Schrodinger operator, $\slashed{D} + B$ on a surface (2 dimesnional Riemannian manifold) $M^{2}$. Before we obtain form 3 in theorem 3, we first check that $A^{-1}$ given by equation (13) is indeed in the inverse of the general endomorphism $A$ given in equation (12). In particular, we compute \\
\\
\begin{align*}
& \frac{f-c(X)-gc(e_{1}e_{2})}{f^{2}+g^{2}+|X|^{2}}[f+c(X)+gc(e_{1}e_{2})]\\ 
&=\frac{f^{2}-c(X)^{2}-g^{2}c(e_{1}e_{2})^{2}+fc(X)-fc(X)+fgc(e_{1}e_{2})-fgc(e_{1}e_{2})-gc(e_{1}e_{2})}{f^{2}+g^{2}+|X|^{2}} \\
&= \frac{f^{2}+|X|^{2}+g^{2}}{f^{2}+g^{2}+|X|^{2}}\\
&=1
\end{align*}
where the second to last line follows because $c(X)^{2}=c(X)c(X)=-|X|^{2}$ and $c(e_{1}e_{2})^{2}=c(e_{1})c(e_{2})c(e_{1})c(e_{2})=(-c(e_{1})^{2})(-c(e_{2}))^{2}=I^{2}I^{2}=I$, where $I$ is the identity endomorphism on $\Gamma(S)$. We next compute $\slashed{D}A$: \\
\\
\begin{align*}
\slashed{D}A &= c(e^{j})\nabla_{e_{j}}[f+c(X)+gc(e_{1}e_{2})] \\
&=c(e^{j})\nabla_{e_{j}}f+c(e^{j})\nabla_{e_{j}}X^{m}c(e_{m})+c(e^{j})\nabla_{e_{j}}gc(e_{1}e_{2}) \\
&=c(\nabla f)-div \ X+curl \ X+c(e^{1})\nabla_{e_{1}}gc(e_{1}e_{2}) \\
&=c(\nabla f)-div \ X+curl \ X+\nabla_{e_{1}}gc(e^{1})c(e_{1}e_{2})+\nabla_{e_{2}}gc(e^{2})c(e_{1}e_{2}) \\
&=c(\nabla f)-div \ X+curl \ X-(\nabla_{e_{1}}g)c(e_{2})+(\nabla_{e_{2}}g)c(e_{1})\\
&=-div \ X +[c(\nabla f)+(\nabla_{e_{2}}g)c(e_{1})-(\nabla_{e_{2}}g)c(e_{2})]+curl \ X
\end{align*} 
From this, it follows that \begin{align*}
B& =A^{-1}\slashed{D}A \\
& =\frac{f-c(X)-gc(e_{1}e_{2})}{f^{2}+g^{2}+|X|^{2}}[[-div \ X +[c(\nabla f)+(\nabla_{e_{2}}g)c(e_{1})-(\nabla_{e_{2}}g)c(e_{2})]+curl \ X]
\end{align*}
which is precisely form 4 given in equation 14. This proves theorem 4.  Note that theorem 5 was already proved by the discussion preceeding it in section 1.2.
\section{Proof of Positive Mass Theorems} 

\subsection{Definition of Asymptotic Flatness, ADM Energy, and Charge} 
Let $(M^{n}, g)$ be a Riemannian manifold of real dimension $n$ containing a vector field $E$ representing the Electric field. Let $K \subset M^{n}$ be a compact subset of $M^{n}$ and fix an arbitrary point $p \in K$. Below, $r$ will denote the variable, $d=d(x,p)$ for a variable point $x \in M^{n}$. For any number $t \in \mathbb{R}$, $B_{t}(0)$ will denote the subset of $\mathbb{R}^{n}$ consisting of all points in  whose distance from the origin ($0$) is not more than $t$. \\
\\
\textbf{Definition} $(M^{n}, g)$ is \emph{asymptotically flat} if there is a real number $r_{1} > 0$ and a diffeomorphism, $\Phi: \mathbb{R}^{n} - B_{r_{1}}(0) \rightarrow M^{n} - K$ such that the pull-back metric $\Phi^{*}g$ satisfies
\\
\begin{equation} 
|(\Phi^{*}g)_{ij} - \delta_{ij}| \leq \frac{C}{r}  
\end{equation} 
\\
and \\
\begin{equation} 
|(\Phi^{*}g)_{ij,k} | \leq \frac{C}{r^{2}}
\end{equation} 
\\
\noindent If $n=3$, $(M^{3}, g, E)$ will serve as initial data for the Einstein-Maxwell Equations in the time-symmetric case. An important global invariant of such a manifold is the ADM mass, representing the strength of the gravitational field at infinity, defined as in [8] (p. 373, equation 1.1.32) by \\
\\
\begin{equation} 
E_{ADM} = \lim_{r \rightarrow \infty} \frac{1}{16\pi} \int_{S(0, r)} (\partial_{l}g_{lj} - \partial_{j}g_{ll}) dS^{j}
\end{equation} 
where $S(0, r) = \partial (\Phi(\mathbb{R}^{n} - B_{r}(0))$ denotes the boundary of the complement of a coordinate ball of radius $r > r_{1}$ in $M^{n}$. In the following sections, we will also set $B(0, r) = \Phi(\mathbb{R}^{n} - B_{r}(0))$. \\
\\
Let $\nu$ denote a unit normal vector field on $S(0,r)$, pointing towards the asymptotically flat end of $M^{n}$. We then define the \emph{total charge}, Q, on $M^{n}$ by  \begin{equation}  
Q = \frac{1}{4\pi}\lim_{r \rightarrow \infty} \int_{S(0,r)} E(\nu)
\end{equation}   
To prove theorems 6-8, we define the charged Dirac Operator. This operator satisfies a Schrodinger-Lichnerowicz formula integral formula that will imply non-negativity of $E_{ADM} - |Q|$, or, in the case of theorem 7, $E_{ADM} - \sqrt{|P|_{g}^{2} + Q^{2}}$.  provided the integral over the boundary is non-negative. Non-negativity of this boundary integral will be ensured once a boundary condition is established for the spinor satisfying the Dirac equation.  
This will follow steps 1-3 of section 3 of [9]. 

\subsection{The Charged Dirac-Operator and The Schrodinger-Lichnerowicz Formulaes}
In this section, $D$ will denote the Levi-Cevita connection on $(M^{n}, g)$, defined by the Riemannian metric, $g$. $\{ e_{j}, j = 1, ..., n \} \in \Gamma(TM^{n})$ will represent an orthonormal frame for $TM^{n}$. For each, $j$, $B_{j}$ will denote a zeroth-order endormorphism of the spinor bundle; the kind we have been considering in previous sections. We will consider modified connections $\nabla$ on $TM^{n}$ locally as follows: \\
\\
\begin{equation} 
\nabla_{e_{j}} = D_{e_{j}} + B_{j}
\end{equation}
In this case, we can define a modified Dirac Operator, $D_{M}$ as follows: \\
\\
\begin{equation} 
\nabla_{e_{j}} = D_{e_{j}} + B_{j}
\end{equation}
In this case, we can define a modified Dirac Operator, $D_{M}$ as follows: \\
\\
\begin{equation} 
D_{M} = c(e^{j})\nabla_{e_{j}} = \slashed{D}+c(e^{j})B_{j}
\end{equation} 
Note that $D_{M}$ is precisely a Dirac-Schrodinger operator of the type considered above, with $B=c(e^{j})B_{j}$. Let $\phi \in \Gamma(S)$ be an arbitrary spinor field. Define $U_{\phi} \in \Gamma(TM^{n})$ locally by the following formula: \\
\\
\begin{equation} 
(U_{\phi})^{j} = <\phi, (\nabla_{e_{j}}\phi+c(e_{j})D_{M}\phi)>  
\end{equation} 
\noindent From now own, we will set $D_{j} = D_{e_{j}}$ and $\nabla_{j} = \nabla_{e_{j}}$ to simplify computation. The Dirac operator, $D_{M}$ given by (41) satisfies the following, \emph{Schrodinger-Lichnerowicz}  formula (cf. equation 3.3.7a-f of [8]):  \\
\\
\begin{equation}
D_{j}U^{j}  = |\nabla \phi|^{2} - |D_{M}\phi|^{2} +<\phi, (D_{j}D^{j}+\slashed{D})\phi> 
+<\phi, A\phi>
\end{equation}
\\
where $A \in End(S)$ is the endomorphism of $S$ given by formulae 3.3.7d-f of section 3.3 of [8]. Note that because of the formula 3.3.7d of [8], this endomorphism involves a covariant derivative and is therefore not zeroth order. 
\\
where $A \in End(S)$ is the endomorphism of $S$ given by formulae 3.3.7d-f of section 3.3 of [8]. Note that because of the formula 3.3.7d of [8], this endomorphism involves a covariant derivative and is therefore not zeroth order.   \\
\\
\noindent If an electric field $E \in \Gamma(TM^{n})$ is defined on $M^{n}$, then the connection on the spinor bundle is modified as follows: \\
\\
\begin{equation} 
\nabla_{j} = D_{j} - \frac{1}{2}c(E)c(e_{i})c(e_{0})
\end{equation}
\\
(cf. equation (57) of section 5.1 of [9]) \\
\\
In this case, the \emph{charged} Dirac-Operator, $D_{M}$, defined as $D_{M}=e^{j}\nabla_{e_{j}}$ takes the following form: \\
\\
\begin{equation} 
D_{M} =\slashed{D}-\frac{1}{2}c(E)c(e_{0})
 \end{equation} 
If, in addition, a symmetric two-tensor, $k \in \Gamma(TM^{n})$ is defined on $TM^{n}$, then the connection, $\nabla$ is modified as follows: \\
\\
\begin{equation} 
\nabla_{j} = D_{j}+ \frac{1}{2}k_{jl}c(e^{l})c(e_{0})-\frac{1}{2}c(E)c(e_{j})c(e_{0})
\end{equation} 
This creates the following formula for $D_{M}$: \\
\\
\begin{equation} 
D_{M} = \slashed{D}+\frac{1}{2}k_{jl}c(e^{j})c(e^{l})c(e_{0})-\frac{1}{2}c(E)c(e_{0})
\end{equation}
\\
\\
If we apply equation (41) to the operator given in (47), we obtain the following: \\
\\
\begin{align} 
D_{j}<\phi, \nabla^{j}+c(e_{j})D_{M}\phi> & = |\nabla \phi|^{2} - |D_{M}\phi|^{2} \\ 
& + \frac{1}{4}<\phi, (R+(tr_{g} k)^{2} -|k|_{g}^{2}-2|E|_{g}^{2})\phi \\
&+\frac{1}{4}<\phi, [2D_{j}(k^{lj}-(tr_{g}k)g^{lj})c(e^{l})c(e_{0})\phi> 
\end{align} 
We will define the energy density, $\mu$, the matter density vector field $J=J^{l}e_{l} \in \Gamma(TM^{3})$, and the ADM momentum vector field, $P=P^{l}e_{l} \in \Gamma(TM^{3})$ as follows: \\
\\
\begin{equation} 
\mu = R+(tr_{g} k)^{2} - |k|_{g}^{2} - 2|E|_{g}^{2} 
\end{equation}
\\
\begin{equation} 
J^{l} = 2D_{j}(K^{jl}-(tr_{g}k)g^{jl})
\end{equation} : \\
\\
\begin{equation} 
P^{j} = \frac{1}{4\pi}\lim_{r \rightarrow \infty} \frac{1}{2}\int_{S(0,r)} [(tr_{g} k)g^{jl}- k^{lj}]dS^{l}
\end{equation}
Then equations 48-50 can be rewritten as follows: \\
\\
\begin{align*} 
D_{j}<\phi, \nabla^{j}\phi+c(e_{j})D_{M}\phi> 
& = |\nabla \phi|^{2} - |D_{M}\phi|^{2} + 
\frac{1}{4}<\phi, [\mu +c(J)]\phi>
\end{align*} 
Now, assume that $M^{n}$ is an asymptotically flat manifold with boundary $\partial M^{n}$. Then we can apply the divergence theorem to both sides of the above equation to obtain the following: \\
\\
\begin{align*} 
\int_{B(0,r)} |\nabla \phi|^{2} - |D_{M} \phi|^{2} + \frac{1}{2}<\phi, [\mu + c(J)]\phi> & -\int_{\partial M^{n}} <\phi, \nabla^{j}\phi+c(e^{j})D_{M}\phi>dS^{j} \\
& = \int_{S(0,r)} <\phi, \nabla^{j}\phi + c(e_{j})D_{M}\phi>dS^{j}\\
\end{align*} 
\noindent Now, the convention is that $\phi$ asymptotes (as $r \rightarrow \infty$) to a constant spinor field, $\phi_{\infty}$. From equation (46), we calculate: \\
\begin{align} 
\nabla^{j}+c(e_{j})D_{M} &= \nabla^{j} + c(e^{j})c(e^{l})\nabla_{l} \\
&=D^{j}+\frac{1}{2}k_{jl}c(e^{l})c(e_{0})-\frac{1}{2}c(E)c(e_{j})c(e_{0}) \\ &+c(e^{j})c(e^{l})(D_{l}+\frac{1}{2}k_{lm}c(e^{m})c(e_{0}) -\frac{1}{2}c(E)c(e_{l})c(e_{0}))\\
& =D^{j}+c(e^{j})\slashed{D} \\
&+\frac{1}{2}k_{jl}c(e^{l})c(e_{0})+\frac{1}{2}k_{lm}c(e^{j})c(e^{l})c(e^{m})c(e_{0}) \\
& -\frac{1}{2}c(E)c(e_{j})c(e_{0})-\frac{1}{2}c(e^{j})c(e^{l})c(E)c(e_{l})c(e_{0})
\end{align} 
First, notice that $j$ is fixed and that $c(e^{l})c(e^{m})=-c(e^{m})c(e^{l})$, so that we can simplify (59) by first computing: 
\\
\begin{equation} 
k_{lm}c(e^{l})c(e^{m})c(e_{0})=(-\Sigma_{l=1}k_{ll})c(e_{0})=-(Tr_{g}k)c(e_{0})
\end{equation} 
This makes the terms in (59) reduce to \\
\\
\begin{equation} 
\frac{1}{2}(k_{jl}c(e^{l})-(Tr_{g}k)c(e^{j}))c(e_{0})
\end{equation} 
As in explained in ([8], p. 130) the expression (61) can be rewritten as follows: \\
\\
\begin{equation} 
\frac{1}{2}((Tr_{g}k)g^{jl}-k^{jl})c(e_{0})c(e_{l})
\end{equation} 
Now, to deal with the terms in (60), first note that any anti-symmetric terms will not contribute to the integral. By, anti-symmetric, we mean a term that becomes negative upon transposition; the transpose of $A \in End(S)$ will be denoted by $A^{T}$. The rules for transposition are as follows: $c(e_{0})^{T}=c(e_{0})$ and $c(e_{l})^{T}=-c(e_{l})$, provided that $l=1,...,n$. The transpose obeys the same rules for endomorphisms of the spinor bundle as it does for matrices. In particular, $(AB)^{T}=B^{T}A^{T}$. We therefore see in particular that $(c(e_{l})c(e_{0}))^{T} = c(e_{l})c(e_{0})$, so that $c(e_{l})c(e_{0})$ is symmetric while $[c(e_{m})c(e_{j})c(e_{0})]^{T}=c(e_{0})c(e_{j})c(e_{m})=c(e_{j})c(e_{m})c(e_{0})=-c(e_{m})c(e_{j})c(e_{0})$ is anti-symmetric if $j \neq m$.  It then follows that the only term of $-\frac{1}{2}c(E)c(e_{j})c(e_{0})$  that contributes  to the integral is $\frac{1}{2}E_{j}c(e_{0})$. Further, we can rewrite the term on the right-hand side of (60) by first rewriting $c(e^{l})c(E)c(e_{l})$ as follows:  \\
\\
\begin{equation} 
c(e^{l})c(E)c(e_{l})=c(e^{l})[-2g(E, e_{l})-c(e_{l})c(E)]=(n-2)c(E)
\end{equation} 
\\
\noindent Plugging this formula back into expression on the left hand side of (60), we obtain the following: 

\begin{equation} 
-\frac{(n-2)}{2}c(e^{j})c(E)c(e_{0})
\end{equation} 
As explained above, the only symmetric term, and thus the only term that contributes to the integral, in (65) is $\frac{(n-2)}{2}E_{j}c(e_{0})$. Therefore, the term from (60) contributing to the integral is $\frac{(n-1)}{2}E_{j}c(e_{0})$. We thus see that, as $r \rightarrow \infty$, the integral over $S(0,r)$ becomes the following: \\
: \\
\\
\begin{align} 
\lim_{r \rightarrow \infty} 
 & \int_{S(0,r)} <\phi_{\infty}, (D^{j}+c(e^{j})\slashed{D})\phi_{\infty}> dS^{j}\\
& \frac{1}{2}\int_{S(0,r)}  ((Tr_{g} k)g^{jl}-k^{jl})<\phi_{\infty}, c(e_{0})c(e_{j})\phi_{\infty}> dS^{j} \\
& + \int_{S(0,r)} <\phi_{\infty}, \frac{(n-1)}{2}E_{j}c(e_{0})\phi_{\infty}>dS^{j}
\end{align} 
It has been shown in [8] (cf. section 3.2) that (65) reduces to $4\pi E_{ADM}|\phi_{\infty}|^{2}$; from the definitions of $P$, $E_{ADM}$, and $Q$, (65)-(67) become the following:
\\
\begin{align}
4\pi <\phi_{\infty},[E_{ADM}+P^{j} c(e_{0})c(e_{j})+\frac{(n-1)}{2}Q]\phi_{\infty}>
\end{align}
Note that in the above expression, $I$ and there is a repeated sum over of $j$. \\
\\
\noindent Now, if $\phi$ is also a solution to the Dirac equation, $D_{M}\phi = 0$, then the equation at the top of page 19, upon letting $r \rightarrow \infty$ becomes the following: \\
\\
\begin{align} 
\int_{M^{n}} |\nabla \phi|^{2} & + \frac{1}{2}<\phi,  [\mu +c(J)]\phi> \\
&-\int_{\partial M^{n}}<\phi, \nabla^{j}\phi>dS^{j}\\
&=4\pi  <\phi_{\infty},[E_{ADM}+P^{j} c(e_{0})c(e_{j})+\frac{(n-1)}{2}Q]\phi_{\infty}>
\end{align}
\noindent Now, let $e_{j}, j = 1, ... , n$ be an orthonormal frame oriented so that $e_{1}$ is an outward pointing normal vector to $\partial M^{n}$ and the vector fields $e_{j}, \ j = 1,..., n-1$ are tangent to $\partial M^{n}$ will reduce to the following: \\
\\
\begin{equation} 
-\int_{\partial M^{n}} <\phi, \nabla^{1}\phi>
\end{equation} \\
\\
If $\phi$ is a solution of the Dirac equation, $c(e^{j})\nabla_{j}\phi = 0$, then we have the following: 
\\
\begin{equation} 
\nabla_{1}\phi=-\Sigma_{m=2}^{n}c(e_{1})c(e_{m})\nabla_{m}\phi]
\end{equation}
where there is a repeated sum over the index $m$. \\
\\
Using equation (46) for $\nabla_{m}$, we obtain the following: \\
\\
\begin{align} 
& \Sigma_{m=2}^{n} -c(e_{1})c(e^{m})\nabla_{m}\phi \\  &=\Sigma_{m=1}^{n-1} [-c(e_{1})c(e_{m})D_{m}\phi \\
&-\frac{k_{ml}}{2}c(e_{1})c(e_{m})c(e^{l})c(e_{0})\phi \\
&-\frac{1}{2}c(e_{1})c(e^{m})c(E)c(e_{m})c(e_{0})\phi
\end{align} 
Since $k$ is a symmetric tensor, the terms in (76) will cancel if unless $l=m$, leaving the following: \\
\\
\begin{equation} 
\Sigma_{m=2}^{n}-\frac{k_{mm}}{2}c(e_{1})c(e_{0})
\end{equation} \\
For (77), notice that \begin{align} c(e_{m})c(E)c(e_{m})  & = c(e_{m})[-2g(E,e_{m})-c(e_{m})c(E)] \\
& =(n-1)c(E)-2[proj_{\partial M^{n}}E] \\
&=(n-3)[proj_{\partial M^{n}} E]+(n-1)E^{1}c(e_{1})
\end{align}
Therefore, the term in (77) reduces to the following: \\
\\
\begin{equation} 
[\frac{n-3}{2}(proj_{\partial M^{n}}E)-\frac{n-1}{2}E^{1}]c(e_{0})
\end{equation} 
\subsection{Boundary Dirac-Schrodinger Operators}
It has been shown (cf. section 5.4 of [9]) that it is possible to define a boundary operator $\slashed{D}_{N_{j}}$ on each component $N_{j}$ of $\partial M^{n}$ and further that \\
\\
\begin{equation} 
\Sigma_{m=2}^{n} -c(e_{1})c(e^{m})\nabla_{m}\phi = \Sigma_{j} [-\slashed{D}_{N_{j}}+\frac{1}{2}H_{j}]
\end{equation}
For each $N_{j}$, define the following: \\
\\
\begin{equation} 
D_{N_{j}} = \slashed{D}_{N_{j}} -\frac{1}{2} \Sigma_{m=2}^{n} k_{mm}c(e_{1})c(e_{0}) + [\frac{n-3}{2}(proj_{\partial M^{n}}E)-\frac{n-1}{2}E^{1}]c(e_{0})
\end{equation} \\
Note that $D_{N_{j}}$ is a Dirac-Schrodinger operator on the component $N_{j}$ of $\partial M^{n}$ with \\
 \\
\\
\begin{equation}
B= [-\frac{1}{2} \Sigma_{m=2}^{n} k_{mm}c(e_{1}) + \frac{n-3}{2}(proj_{\partial M^{n}}E)-\frac{n-1}{2}E^{1}]c(e_{0})
\end{equation}
\\
Further (cf. section 6.1 of [9]), it can be shown that the eigenvalues of $D_{N_{j}}$ will remain the same if $B$ was replaced with the following: \\
\\
\begin{equation}
B= -\frac{1}{2} \Sigma_{m=2}^{n} k_{mm}c(e_{1}) + \frac{n-3}{2}(proj_{\partial M^{n}}E)-\frac{n-1}{2}E^{1}
\end{equation}
Then (74)-(77) become the following: \\
\\
\begin{equation} 
\Sigma_{j} [-D_{N_{j}}+\frac{1}{2}H_{j}]
 \end{equation} 
 \\
It then follows that (72) becomes the following 
\\
\begin{equation} 
\Sigma_{j} \int_{N_{j}} <\phi, [D_{N_{j}}-\frac{1}{2}H_{j}]\phi>
\end{equation} 
\subsection{Proof of Theorems 6-8} 
To prove theorems 6-8, we first note that we can solve $D_{M} \phi = 0$ subject to the boundary condition $P_{j}\phi = 0$, where $P_{j}$ denotes the projection onto the eigenspace of eigenspinors of $D_{N_{j}}$ that have corresponding negative eigenvalues (cf. section 7 of [9]). Let $\lambda_{1j}$ denote the smallest eigenvalue of $D_{N_{j}}$ and assume that $\frac{1}{2}H_{j} \leq \lambda_{j}$. Further, we can decompose $\phi$ into eigenspaces of $D_{N_{j}}$; $\phi = a^{lj}\phi_{lj}$ where $D_{N_{j}}\phi = \lambda_{lj}\phi$; we will assume that $\lambda_{l_{2}j} > \lambda_{l_{1}j}$ if $l_{2} > l_{1}$ (cf. section 5.4 of [9]; the paragraph after corollary 19). Following the same argument from section 5.4 of [9], we obtain: \\
\\
\begin{equation}
\Sigma_{j} \int_{N_{j}} <\phi, [D_{N_{j}}-\frac{1}{2}H_{j}]\phi> \geq \Sigma_{j} [\lambda_{nj}-\frac{1}{2}sup_{N_{j}} H_{j}](a^{lj})^{2}\int_{N_{j}} |\phi_{lj}|^{2}
\end{equation} 
\\
Now, if $\frac{1}{2}H_{j} \leq \lambda_{1j}$ for each $N_{j}$, then we can conclude that \\
\\
\begin{equation} 
\Sigma_{j} \int_{N_{j}} <\phi, [D_{N_{j}} - \frac{1}{2}H]\phi> \geq 0
\end{equation} 

\noindent Note that if $N_{j} = S^{2}$, then the Hijazi-Bar argument will give $\lambda_{j} = \sqrt{\frac{4\pi}{A_{j}}}$, where $A_{j}$ denotes the area of $N_{j}$ will respect to the induced Riemannian metric. \\
\\
From equations (69)-(71), we thus have the following: \\
\\
\begin{align}
4\pi<\phi_{\infty}, [E_{ADM}+P^{j}c(e_{0})c(e_{j})+& \frac{n-1}{2}Q]\phi_{\infty}> \\
& \geq \int_{M^{n}} |\nabla \phi|^{2} +\frac{1}{4}<\phi, [\mu + c(J)]\phi>
\end{align}
\noindent Now, if $\mu \geq |J|_{g}$ (the dominant energy condition), then the term (92) is nonnegative; an argument from [8], section 3.3 will then imply that $E_{ADM} \geq \sqrt{|P|_{g}^{2} + Q^{2}}$, because the choice of the constant $\phi_{\infty}$ spinor field is arbitrary. Now, case 1 of theorem 6 is simply theorem 2 of [9]. Case 2 follows because the first eigenvalue of $D_{N_{j}}$ will be $\sqrt{4\pi}{A_{j}}-\frac{4|Q_{j}|}{A_{j}}$ if $E(\nu)$ is only dependant on a single variable by theorem 5, provided that 
$\sqrt{\frac{4\pi}{A_{j}}} - \frac{4|Q_{j}|}{A_{j}} > 0$. For theorem 7, case 1 and case 3 follow from the Hijazi-Bar argument and case 2 follows from the following argument. If $B-\frac{4\pi Q_{j}}{A_{j}}$ admits an integrating factor, then the Hijazi-Bar argument will yield $|\lambda_{j} - \frac{4\pi Q_{j}}{A_{j}}| \geq \sqrt{\frac{4\pi}{A_{j}}}$; assuming that $Q_{j} <0$, this  yields $\lambda_{1j} \geq \sqrt{\frac{4\pi}{A_{j}}}-\frac{4\pi|Q_{j}|}{A_{j}}$. Theorem 8 follows simply by replacing $\sqrt{\frac{4\pi}{A_{j}}}$ with $\lambda_{j}$ (the smallest eigenvalue of $\slashed{D}_{N_{j}}$. Note: If $H \leq 0$, then an argument from a paper by Hawking-Horowitz and Perry [10], shows that the lower bound on $E_{ADM}$ holds. This concludes the proof of theorems 6-8. 
\section{Outlook} 
In theorems 6-8, we have not considered the case of equality. That is, if $E_{ADM} = |Q|$ or if  $E_{ADM} = \sqrt{|P|_{g}^{2}+Q^{2}}$ on $(M^{n}, g)$, what would that necessarily say about $(M^{n}, g, E)$ and $H_{j}$, or, in the case of theorems 7 and 8, about $k_{mm}$?

\end{document}